\pgfplotsset{compat=newest}
\pgfplotsset{compat=1.11,
	/pgfplots/ybar legend/.style={
		/pgfplots/legend image code/.code={%
			\draw[##1,/tikz/.cd,yshift=-0.25em]
			(0cm,0cm) rectangle (3pt,0.8em);},
	},
}
\tikzset{
	basic box/.style = {
		shape = rectangle,
		align = center,
		draw  = #1,
		fill  = #1!25,
		rounded corners},
	header node/.style = {
		font          = \strut\Large,
		text depth    = +0pt,
		fill          = white,
		draw},
	header/.style = {%
		inner ysep = +1.5em,
		append after command = {
			\pgfextra{\let\TikZlastnode\tikzlastnode}
			node [header node] (header-\TikZlastnode) at (\TikZlastnode.north) {#1}
			node [span = (\TikZlastnode)(header-\TikZlastnode)]
			at (fit bounding box) (h-\TikZlastnode) {}
		}
	},
}
\newcolumntype{C}{>{\centering\arraybackslash}X} 
\newtheorem{proposition}{Proposition}
\newtheorem{theorem}{Theorem}
\newtheorem{definition}{Definition}
\newtheorem{remark}{Remark}[section]
\newtheorem{corollary}{Corollary}[section]
\newcommand{\bd}{\boldsymbol}
\newcommand{\mc}{\mathcal}
\begin{document}


\title{{\huge TSO-DSOs Stable Cost Allocation for the Joint Procurement of Flexibility: A Cooperative Game Approach}\vspace{-0.1cm}}

\author{Anibal Sanjab, 
	Hélène~Le~Cadre, and~Yuting~Mou
	\thanks{The authors are with the Flemish Institute for Technological Research VITO/EnergyVille, Thor Park 8310, 3600 Genk, Belgium. e-mails \{anibal.sanjab, helene.lecadre, yuting.mou\}@vito.be. The authors have equally contributed to this work.}
	\thanks{This work is supported by the EU's Horizon 2020 research and innovation programme under grant agreement No 824414 -- {\bf CoordiNet} project.}\vspace{-0.6cm}
}

\maketitle

\begin{abstract}
In this paper, a transmission-distribution systems flexibility market is introduced, in which system operators (SOs) jointly procure flexibility from different systems to meet their needs (balancing and congestion management) using a common market. This common market is, then, formulated as a cooperative game aiming at identifying a stable and efficient split of costs of the jointly procured flexibility among the participating SOs to incentivize their cooperation. The non-emptiness of the core of this game is then mathematically proven, implying the stability of the game and the naturally-arising incentive for cooperation among the SOs. Several cost allocation mechanisms are then introduced, while characterizing their mathematical properties. Numerical results focusing on an interconnected system (composed of the IEEE 14-bus transmission system and the Matpower 18-bus, 69-bus, and 141-bus distributions systems) showcase the cooperation-induced reduction in system-wide flexibility procurement costs, and identifies the varying costs borne by different SOs under various cost allocations methods.\vspace{-0.4cm} 
\end{abstract}


\section{Introduction}


The increasing integration of distributed energy resources (DERs) and electrification of the consumer energy space (e.g., transportation and heating) pose challenges for grid operation, due to the induced uncertainty and changing load patterns. However, this new energy landscape also enables an unprecedented growing volume of invaluable flexibility\footnote{Flexibility is the ability to dynamically modify consumption and generation patterns providing, as a result, a service to system operators.} (from different voltage levels of the grid) thereby providing essential services (e.g., congestion management and balancing) for transmission system operators (TSOs) and distribution system operators (DSOs). In this respect, the introduction of market mechanisms for the procurement of flexibility from flexibility services provides (FSPs) has been increasingly recommended in policies~\cite{Ceer}, and has been the center of several recent works in the literature~\cite{DLMP_SOCP_HLC,TSODSOCooperation,TSODSOCoordination_UCL,CIRED21,JointTSODSO_Roos,ReboundOffers} and demonstration projects~\cite{FlexmarketsProjects_FSR}.

As FSPs could provide their flexibility as a service to different system operators (SOs), a major branch of the literature has focused on the SOs' joint procurement (i.e. co-optimization) of flexibility~\cite{JointTSODSO_Roos,TSODSOCooperation}, to maximize the grid and system-level value of this flexibility. In particular, a key focus has been shed on the need for coordination between SOs to achieve joint procurement, not only for optimization of economic efficiency but also to ensure that the activated flexibility meets grid operational constraints of all the grids involved~\cite{DLMP_SOCP_HLC,TSODSOCooperation,CIRED21,TSODSOCoordination_VITO,TSODSOCoordination_UCL,TDLocalMarkets}. However, \emph{when jointly procuring flexibility, it is paramount to decide on how the costs of this flexibility should be divided among the participating SOs in the most efficient and fair way, and most importantly, in a manner that incentivizes system operators to naturally collaborate and jointly procure flexibility rather than running their disjoint markets}. To the best of our knowledge\footnote{We note that the work in~\cite{TSODSOCooperation} looks at a Retailer-DSO-TSO coordination setting while focusing particularly on the Shapley value for the numerical computation of cost allocation, hence, defers in scope from the current work.}, no work in the literature has presented a fundamental analysis and comprehensive solution to this joint cooperation and cost allocation problem, as is the goal of the current work. 

In this paper, we develop a novel cooperative game-theoretic\footnote{Recent years have seen a few applications of cooperative game concepts to energy system problems, with primary focus on the residential sector and shared energy investments, such as in~\cite{hupez,robu,PowerSystemGT}.} approach in which 1) we develop the concept of cooperation stability between the TSO and DSOs in a joint flexibility market, 2) determine whether cooperation between the TSO and DSOs is naturally incentivized and the conditions therefore, and 3) derive several cost allocation mechanisms and analytically investigate their properties focusing on concepts such as stability, efficiency, and computational complexity.

Towards this end, we first introduce a novel flexibility market model including a TSO and multiple DSOs for jointly procuring congestion management and balancing services while explicitly accounting for grid constraints. This framework is developed by first introducing disjoint TSO and DSO level markets and joining them in a common market setting. We then formulate the joint procurement of flexibility between any subset of SOs as a cooperative (cost allocation) game and analyze its properties. We subsequently prove the concavity of this game, hence, guaranteeing the non-emptiness of its core\footnote{The core is a game-theoretic concept used to assess cooperation stability, and will be formally defined in Section~\ref{sec:CostGame}.} under several pricing mechanisms. This, as a result, proves that 
\emph{all SOs would be better off joining the grand coalition (i.e., the set of all SOs in a common market setting) as compared to joining any sub-coalition (i.e., a coalition of a subset of SOs, or forming disjoint markets with no cooperation)}. Then, we introduce several cost allocation mechanisms to allocate the costs of the jointly procured flexibility among the different SOs (namely, \emph{the Shapley value, normalized Banzhaf index, cost gap allocation, Lagrangian-based cost allocation, equal profit method, and proportional cost}) and investigate their stability (whether they belong to the core), allocation adequacy, and other mathematical properties such as: 
efficiency, symmetry, additivity, dummy player, anonymity, and computational complexity. 

Our analytical conclusions and results are further corroborated using a flexibility market case analysis focusing on an interconnected test system composed of the IEEE 14-bus transmission system interconnected with three distribution systems -- namely, the Matpower 141-bus, 69-bus and 18-bus systems. 
The numerical results highlight 1) the significant reduction in system cost when more SOs joint the common market grand coalition, highlighting the benefits of cooperation, 2) the essential role of higher interface flows on reaping the benefits of cooperation, and 3) the disparity that can result from different cost allocation methods, hence, providing indispensable inputs to SOs, regulators, and decision makers.

The rest of the paper is organized as follows. Section~\ref{sec:SystemModel} presents the disjoint and common markets formulations. Section~\ref{sec:CostGame} introduces the cooperative game formulation. Section~\ref{sec:Stability} investigates the stability of the game, while Section~\ref{sec:allocation_mechanisms} provides different cost allocation mechanisms and characterizes their properties. Section~\ref{sec:NumericalResults} introduces the numerical results, and Section~\ref{sec:Conclusion} concludes the paper.

\section{Systems and Markets Models}\label{sec:SystemModel}
Consider a transmission system (operated by a TSO) composed of a set of nodes, $\mathcal{N}^T$, and set of lines, $\mathcal{L}^T$, represented by a graph $\mathcal{G}^T(\mathcal{N}^T,\mathcal{L}^T)$. At a subset, $\mathcal{N}^D$, of these nodes, distribution systems (each operated by a DSO) are connected. 
We refer to $\mathcal{N}^D\subseteq\mathcal{N}^T$, as the set of interface nodes, containing $\mathcal{N}^D$ nodes from each of which stems one of the $N^D$ different distribution systems. We let a) $p_n^T$ be the net real power injection at node $n\in\mathcal{N}^T$, b) $\boldsymbol{p}^{T,o}$ and $\boldsymbol{d}^{T,o}$ be, respectively, the vectors of anticipated base injection and load at all transmission system nodes $\mathcal{N}^T$, c) $P_{ij}^T$ denote the real power flow over line $\{i,j\}\in\mathcal{L}^T$, with maximum thermal line limit $F^{T,\textrm{max}}_{ij}$, d) $T_{n^T}^p$ and $T_{n^T}^q$ denote the active and reactive power transfer to the distribution system connected to node $n^T\in\mathcal{N}^D$, and e) $X_{(i,j),n}$ denote the generation shift factor of $P_{ij}^T$ with respect to the net injection at node $n\in\mathcal{N}^T$, capturing the change in the flow over line $\{i,j\}\in\mathcal{L}^T$ due to a change in net injection at node $n\in\mathcal{N}^T$.

We denote a distribution system connected to a transmission node $n^T\in\mathcal{N}^D$ by DSO--$n^T$. Each DSO--$n^T$ is composed of a radial distribution network composed of $\mathcal{N}^{n^T}$ nodes (where node $n_0^{n^T} \in \mathcal{N}^{n^T}$ is the root node of DSO--$n^T$) and $\mathcal{L}^{n^T}$ distribution lines forming a graph, $\mathcal{G}^{n^T}(\mathcal{N}^{n^T},\mathcal{L}^{n^T})$. In each DSO--$n^T$, we let $A(n)$ denote the ancestor node of $n\in\mathcal{N}^{n^T}\setminus\{n_0^{n^T}\}$ and $\mathcal{K}(n)$ the set of predecessor nodes of $n$. For the root node $n_0^{n^T}$, $A(n_0^{n^T})=n^T$. 
For each DSO--$n^T$, we let a) $s_n^{n^T}=p_n^{n^T}+jq_n^{n^T}$ be the net complex power injection at bus $n\in\mathcal{N}^{n^T}$, b) $\boldsymbol{p}^{n^T\!,o}$ and $\boldsymbol{d}^{n^T\!,o}$ be, respectively, the vectors of anticipated base generation and demand at all distribution system nodes $\mathcal{N}^{n^T}$, c) $P_{A(n)n}^{n^T}$ and $Q_{A(n)n}^{n^T}$ be, respectively, the real and reactive power flowing over the line connecting $A(n)$ and $n$,
with a maximum apparent power flow denoted by $S_{A(n)n}^{n^T\!,\textrm{max}}$, 
and d) $v_n^{n^T}$ be the magnitude squared of the voltage at node $n\in\mathcal{N}^{n^T}$, with upper and lower limits specified, respectively, by $v_n^{n^T\!,\textrm{max}}$ and $v_n^{n^T\!,\textrm{min}}$. 
Here, $P_{A(n_0^{n^T})n_0^{n^T}}^{n^T}$ and $Q_{A(n_0^{n^T})n_0^{n^T}}^{n^T}$ denote the interface flows with the transmission grid, which are equal, respectively, to $T_{n^T}^p$ and $T_{n^T}^q$ defined on the transmission side.
In addition, for the line parameters, we let $r_{A(n)n}^{n^T}$ and $x_{A(n)n}^{n^T}$ be, respectively, the resistance and reactance of line $\{A(n),n\}\in\mathcal{L}^{n^T}$.

We next introduce the formulation of disjoint transmission and distribution markets, followed by the common market model, which joins the distribution and transmission-level markets. 

\subsection{Disjoint Transmission-Level Market}

We consider that the TSO's anticipated base schedule $\boldsymbol{p}^{T,o}$ and $\boldsymbol{d}^{T,o}$ shows imbalance and/or line congestion which the TSO aims to solve using resources available only from the transmission level (hence, resulting in a disjoint market). These resources are represented by offers to the market from assets connected to the transmission network. %
Let $\Delta p_n^{T+}$ and $\Delta p_n^{T-}$ correspond, respectively, to the volumes of increase and reduction in generation (respectively, upward and downward flexibility) connected at the transmission bus $n$. In addition, let $\Delta d_n^{T+}$ and $\Delta d_n^{T-}$ correspond, respectively, to the volumes of reduction and increase in demand (i.e., upward and downward flexibility) at node $n$ of the transmission network. In addition, let $c_{p_n}^{T+}$, $c_{p_n}^{T-}$, $c_{d_n}^{T+}$, and $c_{d_n}^{T-}$ represent the unit price offered by adjustable generation and loads for, respectively, $\Delta p_n^{T+}$, $\Delta p_n^{T-}$, $\Delta d_n^{T+}$, and $\Delta d_n^{T-}$. 
%
%
The goal of the TSO is to resolve the balancing and congestion issues at the minimum possible cost. Hence, the TSO's problem can be described as follows:

{\small \begin{align}
\min_{\Delta \boldsymbol{p}^{T}\!,\Delta \boldsymbol{d}^{T}} \!\!\!\sum\limits_{n\in\mathcal{N}^T}\!\!\!\big(c_{p_n}^{T+}\!\Delta p_n^{T+}\!\!-\! c_{p_n}^{T-}\!\Delta p_n^{T-}\!\!+\! c_{d_n}^{T+}\!\Delta d_n^{T+}\!\!-\!c_{d_n}^{T-}\!\Delta d_n^{T-}\big),\vspace{-0.2cm} \label{eq:ObjectiveBalT_OnlyT_v1}
\end{align}\vspace{-0.1cm}
Subject to:
\begin{align}\label{eq:NetInjection_OnlyT}
p_n^T\!\!=\!p^{T,o}_n\!\!+\!\Delta p_n^{T+}\!\!\!-\!\Delta p_n^{T-}\!\!\!-\!d^{T,o}_n\!+\! \Delta d_n^{T+}\!\!\!-\!\Delta d_n^{T-}\!, \forall n\in\mathcal{N}^T,
\end{align}\vspace{-0.6cm}
\begin{align}\label{eq:PowerFlowBalT_OnlyT_v1}
P_{ij}^T=\sum\limits_{n=1}^{N^T}p_n^T X_{(i,j),n} -\sum\limits_{n\in \mathcal{N}^D}T^p_n X_{(i,j),n},\,\, \forall \{i,j\}\in\mathcal{L}^T,
\end{align}
\vspace{-0.4cm}
\begin{align}\label{eq:NodeBalanceBalT_OnlyT_v1}
p_n^T-\sum\limits_{j s.t. \{i,j\}\in\mathcal{L}^T}P_{ij}^T=0: \,\,\,(\lambda^T_n),\,\,\, \forall n\in\mathcal{N}^T\setminus \mathcal{N}^D,
\end{align}\vspace{-0.4cm}
\begin{align}\label{eq:NodeBalanceBalT_OnlyT_v1_atDSONodes}
p_n^T - T^p_n-\sum\limits_{j s.t. \{i,j\}\in\mathcal{L}^T}P_{ij}^T=0: \,\,\,(\lambda^T_n),\,\,\, \forall n\in\mathcal{N}^D,
\end{align}
\vspace{-0.4cm}
\begin{align} \label{eq:CongPlusBalT_OnlyT_v1}
-F_{ij}^{T,\textrm{max}}\leq P_{ij}^T\leq F_{ij}^{T,\textrm{max}}, \forall \{i,j\}\in\mathcal{L}^T,
\end{align}
\vspace{-0.6cm}
\begin{align}\label{eq:TGenBidLimBalT_OnlyTp_v1}
0\leq \Delta p_n^{T+} \leq \Delta p_n^{T+\textrm{,max}}, 
0\leq \Delta p_n^{T-} \leq \Delta p_n^{T-\textrm{,max}}, \, \forall i\in\mathcal{N}^T, 
\end{align}
\vspace{-0.6cm}
\begin{align} \label{eq:TDemBidLimBalT_OnlyTp_v1}
0\leq \Delta d_i^{T+} \leq \Delta d_i^{T+\textrm{,max}}, %
0\leq \Delta d_i^{T-} \leq \Delta d_i^{T-\textrm{,max}},  \, \forall i\in\mathcal{N}^T.
\end{align}}
\vspace{-0.4cm}

Here, $\Delta \boldsymbol{p}^{T}$ and $\Delta \boldsymbol{d}^{T}$ are the vectors grouping, respectively, $\Delta p_n^{T+}$ and $\Delta p_n^{T-}$, and $\Delta d_n^{T+}$ and $\Delta d_n^{T-}$, at all nodes $n\in\mathcal{N}^T$. (\ref{eq:NetInjection_OnlyT}) defines the net injection at node $n\in\mathcal{N}^T$. (\ref{eq:PowerFlowBalT_OnlyT_v1}) consists of the power flow equations
over all the transmission lines expressed using the generation shift factors, $X_{(i,j),n}$.
(\ref{eq:NodeBalanceBalT_OnlyT_v1}) is the energy balance equation at node $n\in\mathcal{N}^T\setminus \mathcal{N}^D$, while  (\ref{eq:NodeBalanceBalT_OnlyT_v1_atDSONodes}) is the energy balance equation at interface node $n\in\mathcal{N}^D$. 
$\lambda_n^T$ denotes the Lagrange multipliers of the nodal energy balance constraints at each node $n\in\mathcal{N}^T$.
In this disjoint transmission system market formulation, $T^p_n$ is considered to be a constant (i.e. not dependent on any decision variable) to reflect the case that balancing and congestion management in this case are to be resolved using flexibility available only at the transmission system. 
In addition, (\ref{eq:CongPlusBalT_OnlyT_v1}) represents the congestion prevention constraints, while 
(\ref{eq:TGenBidLimBalT_OnlyTp_v1}) and  (\ref{eq:TDemBidLimBalT_OnlyTp_v1}) capture the bid limits. 



\subsection{Disjoint Distribution-Level Market}\label{subsec:DisjointDSO}

The disjoint market for congestion management at the DSO--$n^T$ level uses solely distribution grid flexibility 
to solve congestion issues resulting from the anticipated base generation and load profiles, $\boldsymbol{p}^{n^T\!,o}$ and $\boldsymbol{d}^{n^T\!,o}$, at all distribution system nodes $\mathcal{N}^{n^T}$.  
Let $\Delta p_n^{n^T+}$ and $\Delta p_n^{n^T-}$ correspond, respectively, to the volumes of increase and reduction in generation (corresponding to, respectively, upward and downward flexibility) connected at the DSO--$n^T$ node $n$. In addition, we let $\Delta d_n^{n^T+}$ and $\Delta d_n^{n^T-}$ represent upward and downward flexibility at node $n$ (i.e., the volumes of reduction and increase in demand at that node). In addition, similarly to the cost structure on the transmission level, we  let $c_{p_n}^{n^T+}$, $c_{p_n}^{n^T-}$, $c_{d_n}^{n^T+}$, and $c_{d_n}^{n^T-}$ represent the unit prices offered by adjustable generation and loads on the distribution level for, respectively, $\Delta p_n^{n^T+}$, $\Delta p_n^{n^T-}$, $\Delta d_n^{n^T+}$, and $\Delta d_n^{n^T-}$. %



As the DSO--$n^T$'s goal is to resolve congestion issues at minimum cost, its problem can be formulated as follows.
{\small \begin{flalign}
\min_{\Delta \boldsymbol{p}^{n^T},\Delta \boldsymbol{d}^{n^T},\boldsymbol{q}^{n^T}}\sum\limits_{n\in N^{n^T}}\big(&c_{p_n}^{n^T+}\Delta p_n^{n^T+} -c_{p_n}^{n^T-}\Delta p_n^{n^T-}&\nonumber\\\vspace{-0.4cm}
&+c_{d_n}^{n^T+}\Delta d_n^{n^T+}-c_{d_n}^{n^T-}\Delta d_n^{n^T-}\big),&\vspace{-0.4cm} \label{eq:ObjectiveCM_onlyD}
\end{flalign}\vspace{-0.1cm}
Subject to:
\begin{align}
p_n^{n^T}\!\!\!\!=\!p^{n^T\!\!,o}_n\!\!+\!\Delta p_n^{n^T\!+}\!\!\!-\Delta p_n^{n^T\!-}\!\!\!-d^{n^T\!\!,o}_n\!\!+\! \Delta d_n^{n^T\!+}\!\!\!-\!\Delta d_n^{n^T\!-}\!, \forall n\!\in\! \mathcal{N}^{n^T}\!\!,\label{eq:AdjustedInjection_OnlyD_Lin1} 
\end{align}\vspace{-0.4cm}
\begin{align}
p_n^{n^T}\!+\!P_{A(n)n}^{n^T}\!-\!\!\!\sum\limits_{k\in\mathcal{K}(n)}P_{nk}^{n^T}\!\!=0,\,\,\forall n\in\mathcal{N}^{n^T}\setminus n_0^{n^T}:(\lambda_n^{n^T}),\,\,\label{eq:LinDistFlow1_OnlyD_Lin1}\\
q_n^{n^T}+Q_{A(n)n}^{n^T}-\sum\limits_{k\in\mathcal{K}(n)}Q_{nk}^{n^T}=0\,\, \forall n\in\mathcal{N}^{n^T}\setminus n_0^{n^T},\label{eq:LinDistFlow2_OnlyD_Lin}
\end{align}\vspace{-0.4cm}
\begin{align}
T^p_{n^T}-\sum\limits_{k\in\mathcal{K}(n_0^{n^T})}P_{n_0^{n^T}k}^{n^T}=0,\,\,:(\lambda_{n_0^{n^T}}), \label{eq:LinDistFlow1_root_OnlyD_Lin1}\\
T^q_{n^T}-\sum\limits_{k\in\mathcal{K}(n_0^{n^T})}Q_{n_0^{n^T}k}^{n^T}=0,\,\label{eq:LinDistFlow2_root_OnlyD_Lin1}
\end{align}\vspace{-0.3cm}
\begin{align}
v_n^{n^T}\!\!=\!v_{\mathcal{A}(n)}^{n^T}\!-\!2r_{A(n)n}^{n^T}P_{A(n)n}^{n^T}\!-\!2x_{A(n)n}^{n^T}Q_{A(n)n}^{n^T},\,\forall n\in\mathcal{N}^{n^T}\!\!\setminus n_0^{n^T},\label{eq:LinDistFlow3_DSOLocal_OnlyD_Lin1}
\end{align}\vspace{-0.3cm}
\begin{align}
\alpha_m P_{A(n)n}^{n^T}\!\!+\!\beta_m Q_{A(n)n}^{n^T}\!\!+\!\delta_m  S_{A(n)n}^{n^T,\textrm{max}}\!\leq 0, \forall m \!\in\!\mathcal{M}, \{A(n),n\}\! \in\!\mathcal{L}^{n^T}\!\!\!, \label{eq:FlowLimit_OnlyD_Lin1}
\end{align}\vspace{-0.4cm}
\begin{align} \label{eq:VMaxCM_onlyD_Lin1}
v_n^{n^T,\textrm{min}}\leq v^{n_T}_n\leqslant v_n^{n^T,\textrm{max}}, 
\forall n\in\mathcal{N}^{n^T},
\end{align}
\vspace{-0.4cm}
\begin{align}\label{eq:ReactivePowerLim_OnlyD_Lin1}
q_n^{n^T,\textrm{min}} \leq q^{n^T}_n \leq q_n^{n^T,\textrm{max}},\, \forall n\in\mathcal{N}^{n^T}, 
\end{align}\vspace{-0.4cm}
\begin{align}\label{eq:LimExchangeQ_LocalMrkt2}
T_{n^T}^{q,\textrm{min}}\leq T_{n^T}^q\leq T_{n^T}^{q,\textrm{max}}, 
\end{align}\vspace{-0.4cm}
\begin{align}\label{eq:GenBidLimCMp_onlyD1}
0\leqslant \!\Delta p_n^{n^T+} \!\leqslant \Delta p_n^{n^T\!+,\textrm{max}}, 
0\leqslant \Delta p_n^{n^T\!-}\! \leqslant \Delta p_n^{n^T\!-,\textrm{max}}, \,\forall n\in\mathcal{N}^{n^T}, 
\end{align}
\vspace{-0.4cm}
\begin{align} \label{eq:DimBidLimCMp_onlyD1}
0\leqslant \!\Delta d_n^{n^T+} \leqslant \Delta d_n^{n^T\!+,\textrm{max}}\!\!, 
0\leqslant \!\Delta d_n^{n^T\!-} \!\leqslant \Delta d_n^{n^T\!-,\textrm{max}},\,\forall n\in\mathcal{N}^{n^T}\!\!.
\end{align}}
\vspace{-0.4cm}

Here, $\Delta \boldsymbol{p}^{n^T}$ and $\Delta \boldsymbol{d}^{n^T}$ are vectors grouping, respectively, $\Delta p_n^{n^T+}$ and $\Delta p_n^{n^T-}$, and $\Delta d_n^{n^T+}$ and $\Delta d_n^{n^T-}$ at each $n\in\mathcal{N}^{n^T}$. (\ref{eq:AdjustedInjection_OnlyD_Lin1}) returns the net generation, $p_n^{n^T}$, which depends on the base generation and load profiles and the activated flexibility at node $n$.  
The equality constraints in (\ref{eq:LinDistFlow1_OnlyD_Lin1})-(\ref{eq:LinDistFlow3_DSOLocal_OnlyD_Lin1}) follow directly from the \emph{LinDistFlow} formulation \cite{LinDistFlow}, representing the linearized power flow equations in radial distribution networks. 
In a local market mechanism (i.e. disjoint DSO--$n^T$ level market), the interface flow $T^p_{n^T}$ is considered to be constant, i.e., not a decision variable nor a dependent variable 
to be optimized. Hence, this reflects that congestion management on the distribution level is to be resolved solely using flexibility provided from within the distribution system. 
We denote the Lagrange multiplier of constraint (\ref{eq:LinDistFlow1_OnlyD_Lin1}) and (\ref{eq:LinDistFlow1_root_OnlyD_Lin1}) by $\lambda_n^{n^T}$.
Constraint~(\ref{eq:FlowLimit_OnlyD_Lin1}) is a linearization of the complex flow limit constraint. This linearization, as proposed in~\cite{DGHostingCapacity}, is a polygonal inner-approximation that transforms the feasibility region of the flow limit constrain from a circle of radius $S_{A(n)n}^{n^T,\textrm{max}}$ into a polygon whose number of edges are given by the size of the approximation set $\mathcal{M}$. The values of $\alpha_m$, $\beta_m$, and $\gamma_m$ define this polygon such that all its vertices would lie on the original feasibility circle of radius $S_{A(n)n}^{n^T,\textrm{max}}$ (a detailed explanation of this approximation is presented in~\cite{DGHostingCapacity}).
Constraints~(\ref{eq:VMaxCM_onlyD_Lin1}) and (\ref{eq:ReactivePowerLim_OnlyD_Lin1}) capture the limits on the nodal voltage magnitudes and reactive power injections, to ensure operational stability and the real-reactive power operational and capacity limits of load and generation, while (\ref{eq:LimExchangeQ_LocalMrkt2}) enforces a limit on the reactive power transfer with the transmission grid, $T^q_{n^T}$.  
Constraints (\ref{eq:GenBidLimCMp_onlyD1}) and~(\ref{eq:DimBidLimCMp_onlyD1}) reflect the limits of the submitted bids.


\subsection{Common Market Model}
In the common market, the TSO can readily use resources offered from within the different distribution systems connected to its transmission network as well as resources connected to its transmission network to perform balancing and congestion management. Concurrently, the DSOs can use resources offered from their distribution networks for congestion management. 
%
In the common market, flexibility resources (i.e. submitted bids) are accessible to all participating SOs and the market is jointly cleared, in a collaborative manner, to optimally meet the needs of all the SOs while abiding by the operational limits (i.e. constraints) of all the grids involved. 
The proposed common market formulation, incorporating the TSO and the $\mathcal{N}^D$ DSOs, combines the disjoint market models as follows:\vspace{-0.2cm}

{\small \begin{flalign}
&\min_{\Delta \boldsymbol{p},\Delta \boldsymbol{d}, \boldsymbol{q}}\!\Big[\!\!\sum\limits_{n\in\mathcal{N}^T}\!\!\!\left(c_{p_n}^{T+\!}\!\Delta p_n^{T+\!}\!\!-\!c_{p_n}^{T-\!}\!\Delta p_n^{T-\!}\!\!+\!c_{d_n}^{T+\!}\!\Delta d_n^{T+\!}\!\!-\! c_{d_n}^{T-\!}\Delta d_n^{T-\!}\right)&\nonumber\\
&\!+\!\!\!\!\!\sum\limits_{n^T\!\in\mathcal{N}^D}\!\!\sum\limits_{n\in\mathcal{N}^{n^T}}\!\!\!\!\left(\!c_{p_n}^{n^T\!\!+}\!\Delta p_n^{n^T\!\!+}\!\!\!-\!c_{p_n}^{n^T\!\!-}\!\Delta p_n^{n^T\!\!-}\!\!\!+\!c_{d_n}^{n^T\!\!+}\!\Delta d_n^{n^T\!\!+}\!\!\!-\!c_{d_n}^{n^T\!\!-}\!\Delta d_n^{n^T\!-}\!\right)\!\Big],& \label{eq:ObjectiveCMandBal_CommonMrkt}
\end{flalign}
Subject to:\vspace{-0.4cm}
\begin{align}
\textrm{(\ref{eq:NetInjection_OnlyT})--(\ref{eq:TDemBidLimBalT_OnlyTp_v1}), (\ref{eq:AdjustedInjection_OnlyD_Lin1})--(\ref{eq:DimBidLimCMp_onlyD1})\,} \forall n^T\in\mathcal{N}^D, 
\end{align}
\begin{align}\label{eq:LimsumPDSOBalT_CommonMrkt}
\textrm{and\,\,\,\,\,}T_{n^T}^{p, \textrm{min}}\leq T_{n^T}^p\leq T_{n^T}^{p, \textrm{max}} \,\, \forall n^T\in\mathcal{N}^D. 
\end{align}}\vspace{-0.4cm}

$\Delta \boldsymbol{p}$ and $\Delta \boldsymbol{d}$ are the vectors of generation and demand flexibility variables (i.e., respectively, $\Delta\boldsymbol{p}^T$ and $\Delta\boldsymbol{p}^{n^T}$ for all $n^T\in\mathcal{N}^D$, and $\Delta\boldsymbol{d}^T$ and $\Delta\boldsymbol{d}^{n^T}$ for all $n^T\in\mathcal{N}^D$). In addition, $\boldsymbol{q}$ is the vector of $\boldsymbol{q}^{n^T}$ for all $n^T\in\mathcal{N}^D$. 
The common market formulation sums the objective functions of the disjoint markets and uses all the operational and bid level constraints from these markets. 
An additional constraint in the common market is (\ref{eq:LimsumPDSOBalT_CommonMrkt}). 
In this regard, in contrast to the disjoint markets, 
the common market model treats $T^p_{n^T}$ as a dependent  variable (rather than a constant as in the disjoint markets) with limits shown in (\ref{eq:LimsumPDSOBalT_CommonMrkt}). %
This enables the interaction between the previously disjoint markets to jointly procure the flexibility needed by all SOs.  
\begin{remark}\label{Re:Sub-common} 
We note that the common market formulation in (\ref{eq:ObjectiveCMandBal_CommonMrkt})--(\ref{eq:LimsumPDSOBalT_CommonMrkt}) can be readily adapted to reflect a sub-common market setting joining any subset of the SOs. This is readily achieved by replacing $\mathcal{N}^D$ by any defined subset of DSOs in (\ref{eq:ObjectiveCMandBal_CommonMrkt})--(\ref{eq:LimsumPDSOBalT_CommonMrkt}), and by optionally including or excluding the TSO-level market from the formulation.
\end{remark}

\kern-1.35em 

\subsection{Compact Formulation}

Due to it's linearity, and for ease of notation, the common market model can be readily expressed using a compact linear programming formulation. This is carried out next, which will be useful in defining the cooperative game framework.  
  
We introduce $\bd{x}_0$ as the column vector that contains the TSO's decision variables: $\bd{x}_0 \triangleq \textrm{col}\begin{pmatrix}
    \bd{\Delta p}^{T+}, \bd{\Delta p}^{T-}, \bd{\Delta d}^{T+}, \bd{\Delta d}^{T-}
\end{pmatrix}$. 
The TSO's feasibility set $\mc{X}_0$ is defined based on the constraints that depend only on the TSO's decision variables. 
Each DSO--$n^T$'s decision variable column vector, denoted by $\bd{x}_{n^T}$, contains 
$\bd{x}_{n^T} \triangleq \textrm{col}\begin{pmatrix}
    \bd{\Delta p}^{n^T +},\bd{\Delta p}^{n^T -}, \bd{\Delta d}^{n^T +}, \bd{\Delta d}^{n^T -}, \bd{q}^{n^T}
    \end{pmatrix}$.
We define $\mc{X}_{n^T}$ as the feasibility set of DSO--$n^T$. The set is made of constraints that depend only on the DSO's decision variables. 
We let $\bd{x} \in \mc{X}_0 \times \prod_{n^T}\mc{X}_{n^T}$ be the vector that contains the joint decisions of the TSO and DSOs.
Dependent variables, which can be expressed as functions of the TSO and DSOs' decision variables, are concatenated in a column vector $\bd{z}$. Precisely, we set: $\bd{z} \triangleq \textrm{col}\Big(\bd{p}^T, \bd{P}^T, (\bd{p}^{n^T})_{n^T \in \mc{N}^D}, (\bd{P}^{n^T})_{n^T\in\mc{N}^D}, (\bd{Q}^{n^T})_{n^T\in\mc{N}^D},$ 
$(\bd{v}^{n^T})_{n^T\in\mc{N}^D},\bd{T}^p, \bd{T}^q\Big)$. Constants such as $\bd{p}^{T,o}$, $\bd{d}^{T,o}$, $(\bd{p}^{n^T\!,o})_{n^T\in\mc{N}^D}$, $(\bd{d}^{n^T\!,o})_{n^T\in\mc{N}^D}$, $\bd{\Delta p}^{T+,\max}$,$\bd{\Delta p}^{T-,\max}$, $\bd{\Delta d}^{T+,\max}$, $\bd{\Delta d}^{T-,\max}$, ($\bd{\Delta p}^{n^T+,\max}$,$\bd{\Delta p}^{n^T-,\max}$, $\bd{\Delta d}^{n^T+,\max}$, $\bd{\Delta d}^{n^T-,\max})_{n^T\in\mc{N}^D}$, $(\bd{v}^{n^T,\max})_{n^T\in\mc{N}^D}$, $(\bd{v}^{n^T,\min})_{n^T\in\mc{N}^D}$, $(\bd{q}^{n^T,\max})_{n^T\in\mc{N}^D}$, $(\bd{q}^{n^T,\min})_{n^T\in\mc{N}^D}$, $\bd{T}^{q,\max}$, $\bd{T}^{q,\min}$, $\bd{T}^{p,\max}$, $\bd{T}^{p,\min}$, $\bd{\Xi}\big((i,j),n\big)=X_{(i,j),n}$, $(\boldsymbol{r}^{n^T}$, $\boldsymbol{x}^{n^T})_{n^T\in\mathcal{N}^D}$, $\boldsymbol{\alpha}, \boldsymbol{\beta}, \boldsymbol{\delta}$, $\boldsymbol{F}^{T\!,\textrm{max}}$, and $(\boldsymbol{S}^{n^T\!,\textrm{max}})_{n^T\in\mc{N}^D}$, are taken as input parameters. 
%
%
We also define the TSO's cost vector 
$\bd{c}_0 \triangleq \begin{pmatrix}
\bd{c}_p^{T+}, -\bd{c}_p^{T-}, \bd{c}_d^{T+}, -\bd{c}_d^{T-}
\end{pmatrix}$
and DSO--$n^T$'s cost vector 
$\bd{c}_{n^T} \triangleq \begin{pmatrix}
\bd{c}_p^{n^T+}, -\bd{c}_p^{n^T-}, \bd{c}_d^{n^T+}, -\bd{c}_d^{n^T -}, \bd{0}
\end{pmatrix}.$

Based on this compact notation, the TSO's objective function can be written as $\Phi_0(\bd{x}_0) \triangleq \bd{c}_0^t \bd{x}_0$, while the objective function of each DSO-$n^T$ is given by $\Phi_n(\bd{x}_{n^T}) \triangleq \bd{c}_{n^T}^t \bd{x}_{n^T}$, where $.^t$ is the transpose operator.
%
%
The social cost of the common market, introduced in~\eqref{eq:ObjectiveCMandBal_CommonMrkt}, is defined as the sum of the TSO and DSOs' objective functions, i.e., $\Phi(\bd{x}) \triangleq \Phi_0(\bd{x}_0)+\sum_{n^T \in \mc{N}^T} \Phi_{n^T}(\bd{x}_{n^T})$.
To write the common market model \eqref{eq:ObjectiveCMandBal_CommonMrkt}-\eqref{eq:LimsumPDSOBalT_CommonMrkt} in a compact matrix form, a change of notations is needed. For that purpose, we order the DSOs in ascending order form $1$ to $N=|\mathcal{N}^D|$ based on the values of their initial $n^T$. In addition, we denote $\bd{x} \triangleq \Big( \bd{x}_0, (\bd{x}_n)_n\Big)$ as the concatenation of the TSO and DSOs' decision variables.

As the common market model in \eqref{eq:ObjectiveCMandBal_CommonMrkt}-\eqref{eq:LimsumPDSOBalT_CommonMrkt} is linear, it can readily be presented as a standard compact linear program (LP), using our defined vector notation, as follows:\vspace{-0.4cm}

{\small \begin{subequations} \label{eq:compact_model}
\begin{align}
\textrm{(LP)} \hspace{0.8cm}    \min_{\bd{x},\bd{z}} & \hspace{0.2cm} \Phi(\bd{x}), \label{eq:compact1}\\
    s.t. & \hspace{0.2cm} \bd{A} \bd{x} + \bd{B} \bd{z} \leq \bd{d}, & \hspace{0.1cm} (\bd{\lambda}) \label{eq:compact2} \\ 
         &\hspace{0.2cm}  \bd{x}_0 \in \mc{X}_0, \label{eq:compact3} \\
         & \hspace{0.2cm} \bd{x}_{n} \in \mc{X}_{n}, \forall n \in \mc{N},  \label{eq:compact4} \\
         & \hspace{0.2cm} \bd{z} \in \mc{Z}. \label{eq:compact5}
\end{align}\vspace{-0.4cm}
\end{subequations}}

The steps towards this compact formulation are presented in the Appendix. Note that \eqref{eq:compact2} can be equivalently written as $(\bd{A}_0\bd{x}_0+\bd{B}_0\bd{z}_0)+\sum_n(\bd{A}_n\bd{x}_n+\bd{B}_n\bd{z}_n)\leq \bd{d}$ to differentiate between the TSO and the different DSO $n$'s variables. We let $\bd{\lambda}$ be the dual variable of \eqref{eq:compact2}.

Constraints \eqref{eq:compact3} and \eqref{eq:compact4} can be analytically written as $\Psi_0(\bd{x}_0)\leq \bd{0}$ and $\Psi_n(\bd{x}_n) \leq \bd{0}, \forall n \in \mc{N}$ respectively. We denote $\bd{\mu}_0$ and $(\bd{\mu}_n)$ as the associated dual variables.

The Lagrangian function associated with the optimization problem \eqref{eq:compact_model} is defined as follows:\vspace{-0.4cm} 

\begin{small}\begin{align} \label{eq:lagrangien}
\mc{L}(\bd{x},\bd{z},\bd{\lambda}, \bd{\mu}) \triangleq & \Phi(\bd{x}) + \bd{\lambda}^t \Big( \bd{A} \bd{x} + \bd{B} \bd{z} -  \bd{d} \Big) + \bd{\mu}_0^t \Psi_0(\bd{x}_0) \nonumber \\
+& \sum_n \bd{\mu}_n^t \Psi_n(\bd{x}_n).\vspace{-0.2cm}
\end{align}\end{small}

In addition, we introduce the relaxed formulation of \eqref{eq:compact_model} where the coupling constraint \eqref{eq:compact2} is replaced by a penalty in the objective:\vspace{-0.4cm}

\begin{small}\begin{subequations}\label{eq:relax}
\begin{align} 
    \max_{\bd{\lambda} \geq \bd{0}} \min_{\bd{x} \in \mc{X},\bd{z} \in \mc{Z}} & \hspace{0.2cm} \Big( \Phi(\bd{x})+\bd{\lambda}^t (\bd{A}\bd{x}+\bd{B}\bd{z}-\bd{d}) \Big), \label{eq:relax1} \\
    s.t. & \hspace{0.2cm} \bd{x}_0 \in \mc{X}_0, \label{eq:relax2} \\
    & \hspace{0.2cm} \bd{x}_n \in \mc{X}_n, \forall n \in \mc{N}, \label{eq:relax3}\\
    & \hspace{0.2cm} \bd{z} \in \mc{Z}. \label{eq:relax4}
\end{align}\vspace{-0.4cm}
\end{subequations}\end{small}

The dual function of problem (\ref{eq:compact_model}) is $g(\bd{\lambda},\bd{\mu})\triangleq \min_{\bd{x},\bd{z}} \mc{L}(\bd{x},\bd{z},\bd{\lambda},\bd{\mu})$ where $\mc{L}(.)$ is defined in \eqref{eq:lagrangien}. The associated dual problem is $\max_{\bd{\lambda}\geq \bd{0},\bd{\mu}\geq \bd{0}} g(\bd{\lambda}, \bd{\mu})$. The duality gap is the non-negative number $\Phi(\bd{x}^*)-g(\bd{\lambda}^*,\bd{\mu}^*)$ where $\bd{x}^*$ is solution to the primal problem \eqref{eq:compact_model} and $\bd{\lambda}^*,\bd{\mu}^*$ are solutions to the dual problem. Under constraint qualification (e.g. Slater’s condition) for \eqref{eq:compact_model}, strong duality
implies:\vspace{-0.4cm}

\begin{small}\begin{align}
    \min_{\bd{x}, \bd{z}} \max_{\bd{\lambda}\geq \bd{0}} \mc{L}(\bd{x},\bd{z},\bd{\lambda}) = \max_{\bd{\lambda}\geq\bd{0}} \min_{\bd{x}, \bd{z}} \mc{L}(\bd{x},\bd{z},\bd{\lambda}).
\end{align}\end{small}

Since $\Psi_0(.), \Psi_n(.), \forall n \in \mc{N}$ and \eqref{eq:compact2}, \eqref{eq:compact5} are all defined through affine functions, Slater's condition is not strictly required and can be replaced with non-strict inequalities in the (LP) feasibility set.


\section{Cost Game Definition}\label{sec:CostGame}

We consider a game that is populated by the non-empty set $\mc{N}$ of the $N$ DSOs and the TSO. We will refer to them as the players of the game. A coalition is a subset of $\mc{N}^G \triangleq \mc{N} \cup \{\textrm{TSO}\}$. The grand coalition is the set $\mc{N}^G$ of all SOs (players), with cardinality $N^G$. We cast this game as a characteristic function cost game, $G$, formally defined as follows.

\begin{definition}[Characteristic Function Cost Game \cite{chalkiadakis}] \label{def:characteristic_fct_game}
A characteristic function cost game $G$ is given by a pair $(\mc{N}^G,v)$, where $v:2^{\mc{N}^G} \rightarrow \mathbb{R}$ is a characteristic function, which maps each coalition $C \subseteq \mc{N}^G$ to a real number $v(C)$. The real number $v(C)$ denotes the value of the coalition $C$.
\end{definition}


The common market in \eqref{eq:compact_model}, and its special cases for a singleton TSO or DSO market (or any sub-common market containing any subset of the SOs as discussed in Remark~\ref{Re:Sub-common}), corresponds to a characteristic function $v$ given by:\vspace{-0.2cm} 

\begin{small}\begin{align}  \label{eq:characteristic_fct}
v(C) = \sum_{n \in C} \bd{c}_n^t \bd{x}_{n}^* \; \textrm{where $C \subseteq \mc{N}^G$},
\end{align}\end{small}\vspace{-0.2cm}

and $(\bd{x}_n^*)_{n \in C}$ is the optimum of the optimization problem\vspace{-0.4cm}

\begin{small}\begin{subequations}\label{eq:coal_pb}
\begin{align}
    \min_{(\bd{x}_n)_{n\in C},(\bd{z}_n)_{n\in C}} & \hspace{1.cm} \sum_{n \in C} \Phi_n(\bd{x}_{n}), \label{eq:subPb}\\
    s.t. & \hspace{1.cm} \bd{A} \bd{x}+\bd{B} \bd{z} \leq \bd{d}, \label{constraint:subPb}\\
    & \hspace{1.cm} \bd{x}_n \in \mc{X}_n, \bd{z}_n \in \mc{Z}_n, \forall n \in C.
\end{align}
\end{subequations}\end{small}\vspace{-0.4cm}

As the coalitional value $v(C)$ (i.e., the total cost of the common and sub-common markets) can be divided amongst the members of $C$ in any way that the members of $C$ choose, $G$ is classified as a transferable utility (TU) game~\cite{chalkiadakis}.

An outcome of the characteristic function game, $G$, consists of two parts: i) a partition of the SOs (i.e. the players) into coalitions, known as the coalition structure; and ii) a cost vector, distributing the value of each coalition among its members. These concepts are formally defined next. 

\begin{definition}[Coalition Structure]
Given our game $G=(\mc{N}^G,v)$, a coalition structure over $\mc{N}^G$ is a collection of non-empty subsets $CS=\{C^1,...,C^s\}$ such that $\cup_{j=1}^{s} C^j=\mc{N}^G$ and $C^i\cap C^j=\emptyset, \forall i,j \in \{1,...,s\}$ such that $i \neq j$.
\end{definition}

A vector $\bd{y}=(y_1,...,y_{N^G}) \in \mathbb{R}^{N^G}$ is a cost allocation vector for a coalition structure $CS = \{C^1,...,C^s\}$ over $\mc{N}^G$ if: 
$y_n \geq 0, \forall n \in \mc{N}^G$, $\sum_{n \in C^{s'}} y_n \geq v(C^{s'})$ for any $s' \in \{1,...,s\}$ (feasibility condition). The efficiency of $\bd{y}$ is defined as follows.

\begin{definition}[Efficiency]
A cost allocation vector $\bd{y}$ is efficient if all the coalition cost is distributed amongst coalition members, i.e., $\sum_{n \in C^{s'}} y_n = v(C^{s'}), \forall s' \in \{1,...,s\}$.
\end{definition}

The space of all coalition structures will be denoted $\mc{CS}$. An outcome of $G$ is, hence, a pair $(CS, \bd{y})$. 
For a cost allocation vector $\bd{y}$, we let $y(C) \triangleq \sum_{n\in C} y_n$ denote the total cost allocation of a coalition $C \subseteq \mc{N}^G$ under $\bd{y}$. By extension, the social cost of the coalition structure $CS$ will be denoted $v(CS) \triangleq \sum_{C \in CS} v(C)$.



For the derivations that ensue, we now recall classical definitions of two subclasses of coalitional games that will be useful thereafter: submodular games and concave games. 

\begin{definition}[Submodularity \cite{chalkiadakis}] \label{def:submodularity}
A characteristic function $v$ is said to be submodular if it satisfies\vspace{-0.4cm}

\begin{small}\begin{align*}
    v(C \cup C') + v(C \cap C') \leq v(C) +v(C'),
\end{align*}\end{small}\vspace{-0.4cm}

\noindent for every pair of coalitions $C, C' \subseteq \mc{N}^G$. A game with a submodular characteristic function is said to be concave.
 \end{definition}
 
Concave games have an intuitive characterization in terms of players' marginal contributions: in a concave game, a player (i.e. an SO) is more useful (decreasing the group cost) when it joins a bigger coalition, as formally defined next. 
 
\begin{definition}[Concavity \cite{chalkiadakis}] \label{def:concavity}
A characteristic function game $G$ is concave if and only if for every pair of coalitions $C,C'$ such that $C\subset C'$ and every player $n \in \mc{N}^G \setminus C'$ it holds that
$v(C \cup \{n\}) - v(C) \leq v(C' \cup \{n\}) -v(C')$.
\end{definition}

To define whether a common market can naturally arise, the stability of the cooperation between the SOs in a common market must be defined and verified.
A stable coalition is a coalition from which no SO has an incentive to deviate. A stable grand coalition, is the coalition including the TSO and all DSOs and which is stable. 
Consider an outcome $(CS,\bd{y})$ of the cost game $G$. If $y(C) > v(C)$ for some $C \subseteq \mc{N}^G$, the SOs in $C$ could do better by abandoning the coalition structure $CS$ and forming a coalition of their own.
Thus, in this case, the outcome $(CS, \bd{y})$ is unstable. The set of stable outcomes, i.e., outcomes where no subset of SOs (players) has an incentive to deviate, is called the core of $G$.

\begin{definition}\label{eq:TheCore}
The core $\mc{C}(G)$ of the characteristic function game $G=(\mc{N}^G,v)$ is the set of all efficient outcomes $(\mc{CS},\bd{y})$ such that $y(C) \leq v(C), \forall C \subseteq \mc{N}^G$. The core of our cost game $G$ can formally be defined as follows: $\mc{C}(G) \triangleq  \{ \bd{y} \in \mathbb{R}^{N^G} | y(\mc{N}^G)=v(\mc{N}^G) \; \textrm{and} \; y(C) \leq v(C), 
     \hspace{0.2cm} \forall C \subseteq \mc{N}^G \}$.
\end{definition}





The constraints imposed on $\mc{C}(G)$ ensure that 
no TSO or DSO has an incentive to leave the grand coalition (of all SOs in a common market) and form any subcoalition (a sub-common market as defined in Remark~\ref{Re:Sub-common}, encompassing any subset of SOs including the singleton coalitions/disjoint markets).

We next prove and analyze the stability of the grand coalition in our game and, hence, the naturally arising common benefit of the SOs to cooperate in a common flexibility market.\vspace{-0.2cm}  
%
%
%

\section{Stability of Cost Game $G$}\label{sec:Stability}
For the common market to naturally arise and be sustained, collaboration among SOs should be naturally beneficial to all of them. This is achieved if their collaboration is stable, i.e., when the core of our formulated TSO-DSO game $G$, defined in Definition~\ref{eq:TheCore}, is non-empty. 

We next prove that the core of $G$ is non-empty and hence the TSO and DSOs have a natural incentive to collaborate and form a common market.  
To this end, we first begin by proving that it is beneficial for any DSO to collaborate with the TSO than forming a disjoint market.


\begin{proposition} \label{prop:DSO_TSO_coop}
Any DSO $n \in \mc{N}$ prefers cooperating with the TSO than remaining alone.
\end{proposition}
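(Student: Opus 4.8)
The plan is to establish the pairwise subadditivity inequality $v(\{n,\textrm{TSO}\}) \le v(\{n\}) + v(\{\textrm{TSO}\})$, which says exactly that DSO~$n$ and the TSO together cannot incur a higher total flexibility-procurement cost than when each runs its own disjoint market; any division of $v(\{n,\textrm{TSO}\})$ that passes on the resulting non-negative savings then leaves both players no worse off, so that cooperating with the TSO is weakly preferred by DSO~$n$. First I would instantiate the coalitional problem \eqref{eq:coal_pb} for $C=\{n,\textrm{TSO}\}$: minimize $\Phi_0(\bd{x}_0)+\Phi_n(\bd{x}_n)$ over $\bd{x}_0\in\mc{X}_0$, $\bd{x}_n\in\mc{X}_n$ and the coupling block \eqref{eq:compact2}, with the interface flows $T^p_{n'}$ of every other DSO $n'\neq n$ held at their anticipated baseline values, exactly as in the disjoint transmission market.

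The key structural observation is that the disjoint transmission market \eqref{eq:ObjectiveBalT_OnlyT_v1}--\eqref{eq:TDemBidLimBalT_OnlyTp_v1} and the disjoint DSO--$n$ market \eqref{eq:ObjectiveCM_onlyD}--\eqref{eq:DimBidLimCMp_onlyD1} are obtained from this coalitional problem by further fixing $T^p_n$ to its common anticipated value $\bar{T}^p_n$, i.e.\ by intersecting the coalitional feasibility set with the hyperplane $\{T^p_n=\bar{T}^p_n\}$: once $T^p_n$ is pinned, the coupling block \eqref{eq:compact2} separates into the transmission-side relations \eqref{eq:PowerFlowBalT_OnlyT_v1}--\eqref{eq:CongPlusBalT_OnlyT_v1} and the distribution-side LinDistFlow relations and flow/voltage limits of DSO--$n$, the only genuine TSO--DSO$n$ interaction being through $T^p_n$ itself.

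Next I would take $\bd{x}_0^*$ (optimal for the disjoint TSO market, value $v(\{\textrm{TSO}\})$) and $\bd{x}_n^*$ (optimal for the disjoint DSO--$n$ market, value $v(\{n\})$), together with their induced dependent vectors, and form the concatenated point that keeps $T^p_n=\bar{T}^p_n$. This point is feasible for the coalitional problem of $\{n,\textrm{TSO}\}$: the own-constraints $\bd{x}_0^*\in\mc{X}_0$ and $\bd{x}_n^*\in\mc{X}_n$ hold by construction; the transmission-side rows of \eqref{eq:compact2} hold because $T^p_n$ takes the same constant $\bar{T}^p_n$ used in the disjoint TSO market; the distribution-side rows hold because $\bd{x}_n^*$ solved the disjoint DSO--$n$ market (in particular \eqref{eq:LinDistFlow1_root_OnlyD_Lin1}, \eqref{eq:LinDistFlow2_root_OnlyD_Lin1} and \eqref{eq:LimExchangeQ_LocalMrkt2} are met); and the additional interface bound \eqref{eq:LimsumPDSOBalT_CommonMrkt} holds because $\bar{T}^p_n\in[T_n^{p,\min},T_n^{p,\max}]$, the anticipated schedule being assumed to respect the interface limits. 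Since $v(\{n,\textrm{TSO}\})$ minimizes the same objective $\Phi_0+\Phi_n$ over a feasible set containing this point, $v(\{n,\textrm{TSO}\})\le \Phi_0(\bd{x}_0^*)+\Phi_n(\bd{x}_n^*)=v(\{\textrm{TSO}\})+v(\{n\})$, which is the desired inequality.

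I expect the main obstacle to be purely the bookkeeping: making rigorous the claim that the single coupling block $\bd{A}\bd{x}+\bd{B}\bd{z}\le\bd{d}$, restricted to $C=\{n,\textrm{TSO}\}$, decomposes into transmission-only rows, DSO--$n$-only rows, and the interface row, so that fixing $T^p_n$ reproduces exactly the two disjoint feasibility sets. One should also carry explicitly --- or fold into the standing assumptions --- the requirement that the anticipated active interface flow lies within $[T_n^{p,\min},T_n^{p,\max}]$ (and, analogously, that the baseline reactive interface flow satisfies \eqref{eq:LimExchangeQ_LocalMrkt2}), since otherwise the disjoint DSO--$n$ solution need not be feasible in the coalitional market and the concatenation argument breaks down.
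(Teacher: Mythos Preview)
Your argument is correct and is the standard primal ``feasible--set inclusion'' proof of subadditivity: concatenate the optimal disjoint TSO and DSO--$n$ schedules at the common baseline interface flow $\bar T_n^p$, observe that this point is feasible for the $\{n,\textrm{TSO}\}$ coalitional problem (which lets $T_n^p$ vary within $[T_n^{p,\min},T_n^{p,\max}]$), and conclude that the coalitional minimum cannot exceed the sum of the disjoint optima. The paper proves the same inequality \eqref{eq:v_relation} but by a dual route: it invokes strong duality to replace each value by the Lagrangian max--min \eqref{eq:relax}, and then asserts inequality \eqref{eq:v_relation2} ``by definition'', using implicitly that the penalty $\bd\lambda^t(\bd A\bd x+\bd B\bd z-\bd d)$ decomposes additively across the TSO and DSO--$n$ blocks. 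Your primal argument is more elementary and self-contained (no appeal to Slater/duality), and it makes transparent exactly which structural fact is doing the work, namely that fixing $T_n^p$ decouples the coupling rows. The paper's dual argument, on the other hand, avoids your explicit bookkeeping about which rows of $\bd A\bd x+\bd B\bd z\le\bd d$ belong to whom, at the price of relying on strong duality and leaving the separability step implicit. You are also right to surface the standing assumption $\bar T_n^p\in[T_n^{p,\min},T_n^{p,\max}]$ (and the reactive analogue): the paper does not state it explicitly in the proof, but it is needed for the concatenated point to satisfy \eqref{eq:LimsumPDSOBalT_CommonMrkt}.
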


\begin{proof}

It is profitable for any DSO $n \in \mc{N}$ to cooperate with the TSO if and only if:\vspace{-0.4cm}

\begin{small}\begin{align} \label{eq:v_relation}
v(\{\textrm{TSO} \cup n\}) \leq v(\{n\}) + v(\{\textrm{TSO}\}).
\end{align}\end{small}\vspace{-0.4cm}

Considering the relaxed version of \eqref{eq:compact_model} which we proved to be equivalent to \eqref{eq:compact_model} under weak Slater's condition, the inequality \eqref{eq:v_relation} is equivalent to the following one:\vspace{-0.4cm}

\begin{small}\begin{align} \label{eq:v_relation2}
    &\max_{\bd{\lambda}} \min_{\bd{x}_{0}, \bd{z}_{0}, \bd{x}_n, \bd{z}_n} \Big[ \bd{c}_{0}^t \bd{x}_{0} + \bd{c}_n^t \bd{x}_n + \bd{\lambda}^t (\bd{A}\bd{x}+\bd{B}\bd{z} -\bd{d})\Big] \nonumber \\
    \leq& \max_{\bd{\lambda}} \min_{\bd{x}_{0}, \bd{z}_{0}} \Big[\bd{c}_{0}^t \bd{x}_{0} + \bd{\lambda}^t (\bd{A}\bd{x}+\bd{B}\bd{z}-\bd{d})\Big] \nonumber \\
    & \hspace{1.cm} + \max_{\bd{\lambda}} \min_{\bd{x}_n, \bd{z}_n} \Big[ \bd{c}_n^t \bd{x}_n + \bd{\lambda}^t (\bd{A}\bd{x}+\bd{B}\bd{z}-\bd{d})\Big],
\end{align}\end{small}\vspace{-0.4cm}

\noindent where $\bd{x}_{0} \in \mc{X}_{0}, \bd{x}_n \in \mc{X}_n, \bd{z}_{0} \in \mc{Z}_{0}, \bd{z}_n \in \mc{Z}_n$. Since \eqref{eq:v_relation2} holds by definition, we can conclude that any DSO $n \in \mc{N}$ has an incentive to cooperate with the TSO.
\end{proof}

We next prove that not only would any DSO benefit from collaborating with the TSO, but also any DSO has an incentive to be part of the grand coalition made of all of the $\mc{N}$ DSOs and the TSO. We start by proving an intermediate result that characterizes the cost game $G$.

\begin{proposition}  \label{prop:concavity}
The characteristic function game $G=(\mc{N}^G,v)$ is concave.
\end{proposition}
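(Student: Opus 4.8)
The plan is to establish concavity directly from Definition~\ref{def:submodularity}, i.e. to show $v(C\cup C')+v(C\cap C')\le v(C)+v(C')$ for all $C,C'\subseteq\mc{N}^G$, and the natural tool is the relaxed (dualized) formulation \eqref{eq:relax}, exactly as in the proof of Proposition~\ref{prop:DSO_TSO_coop}. The key observation is that, because the coupling constraint \eqref{eq:compact2} decomposes additively as $(\bd{A}_0\bd{x}_0+\bd{B}_0\bd{z}_0)+\sum_n(\bd{A}_n\bd{x}_n+\bd{B}_n\bd{z}_n)\le\bd{d}$ and the objective $\Phi(\bd{x})=\sum_{n\in C}\bd{c}_n^t\bd{x}_n$ is separable across players, the coalitional value can be written, for any multiplier $\bd{\lambda}\ge\bd{0}$, as a sum of per-player inner minimizations plus the common term $-\bd{\lambda}^t\bd{d}$. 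Concretely, I would first write, using strong duality for the LP \eqref{eq:coal_pb} (valid under the weak Slater condition already invoked in the paper),
\begin{small}\begin{align*}
v(C)=\max_{\bd{\lambda}\ge\bd{0}}\Big[\,\sum_{n\in C}h_n(\bd{\lambda})\;+\;h_0(\bd{\lambda})\mathbb{1}[\textrm{TSO}\in C]\;-\;\bd{\lambda}^t\bd{d}\,\Big],
\end{align*}\end{small}
where $h_n(\bd{\lambda})\triangleq\min_{\bd{x}_n\in\mc{X}_n,\bd{z}_n\in\mc{Z}_n}\big[\bd{c}_n^t\bd{x}_n+\bd{\lambda}^t(\bd{A}_n\bd{x}_n+\bd{B}_n\bd{z}_n)\big]$ collects the contribution of player $n$ (with the analogous definition $h_0$ for the TSO). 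Defining $H_C(\bd{\lambda})\triangleq\sum_{n\in C}h_n(\bd{\lambda})+h_0(\bd{\lambda})\mathbb{1}[\textrm{TSO}\in C]-\bd{\lambda}^t\bd{d}$, we have $v(C)=\max_{\bd{\lambda}\ge\bd{0}}H_C(\bd{\lambda})$.

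The second step is the elementary counting identity for the set functions built from these per-player terms: for any $C,C'$ and any fixed $\bd{\lambda}$,
\begin{small}\begin{align*}
H_{C\cup C'}(\bd{\lambda})+H_{C\cap C'}(\bd{\lambda})=H_C(\bd{\lambda})+H_{C'}(\bd{\lambda}),
\end{align*}\end{small}
since each player $n$ (and the TSO) is counted on the left with total multiplicity $\mathbb{1}[n\in C\cup C']+\mathbb{1}[n\in C\cap C']=\mathbb{1}[n\in C]+\mathbb{1}[n\in C']$, and the constant $-\bd{\lambda}^t\bd{d}$ appears twice on each side. Then I would take maxima over $\bd{\lambda}\ge\bd{0}$: let $\bd{\lambda}^\star$ attain $v(C\cup C')=\max_{\bd{\lambda}}H_{C\cup C'}(\bd{\lambda})$ and $\bd{\mu}^\star$ attain $v(C\cap C')$. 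Using the identity at the two points $\bd{\lambda}^\star$ and $\bd{\mu}^\star$ and the trivial bounds $H_C(\bd{\lambda})\le v(C)$, $H_{C'}(\bd{\lambda})\le v(C')$, a short averaging/splitting argument gives
\begin{small}\begin{align*}
v(C\cup C')+v(C\cap C')=H_{C\cup C'}(\bd{\lambda}^\star)+H_{C\cap C'}(\bd{\mu}^\star)\le v(C)+v(C'),
\end{align*}\end{small}
which is precisely submodularity; concavity then follows by Definition~\ref{def:submodularity}. (Equivalently, one can argue via Definition~\ref{def:concavity} on marginal contributions, but the submodular inequality is cleaner because the constant $-\bd{\lambda}^t\bd{d}$ cancels verbatim.)

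The main obstacle I anticipate is making the manipulation of the two separate maximizers $\bd{\lambda}^\star$ and $\bd{\mu}^\star$ fully rigorous — in particular, one must be careful that $H_{C\cup C'}(\bd{\lambda}^\star)+H_{C\cap C'}(\bd{\mu}^\star)\le H_{C\cup C'}(\bd{\mu}^\star)+H_{C\cap C'}(\bd{\lambda}^\star)$ is \emph{not} what one needs; instead one uses the pointwise identity plus the two upper bounds $H_C\le v(C)$, $H_{C'}\le v(C')$ evaluated at whichever single multiplier one chooses, together with the fact that $v(C\cup C')$ and $v(C\cap C')$ are each \emph{maxima} (hence lie above $H$ at any point). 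A clean way to sidestep any subtlety is to evaluate the identity at a single common $\bd{\lambda}$ chosen as a maximizer of $H_{C\cup C'}+H_{C\cap C'}$ over $\bd{\lambda}\ge\bd{0}$, note that this sum equals $H_C+H_{C'}$ at that point, bound each summand by its own max, and observe $v(C\cup C')+v(C\cap C')\le\max_{\bd\lambda}(H_{C\cup C'}+H_{C\cap C'})=\max_{\bd\lambda}(H_C+H_{C'})\le v(C)+v(C')$ — wait, the first inequality here also needs care since $\max(f+g)\ge\max f+\max g$ is false in general; so the correct and simplest route is: $v(C\cup C')+v(C\cap C')\le\max_{\bd\lambda}\big(H_{C\cup C'}+H_{C\cap C'}\big)$ is \emph{wrong}, and one should instead fix $\bd\lambda^\star$ maximizing $H_{C\cup C'}$ and $\bd\mu^\star$ maximizing $H_{C\cap C'}$, apply the pointwise identity separately and use monotonicity of $\max$; I will write this out with the per-player structure made explicit so the bookkeeping is transparent.
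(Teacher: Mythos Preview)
Your approach via the Lagrangian dual decomposition is genuinely different from the paper's, but it has a real gap that the back-and-forth in your final paragraph does not close. Even granting the pointwise modular identity $H_{C\cup C'}(\bd{\lambda})+H_{C\cap C'}(\bd{\lambda})=H_C(\bd{\lambda})+H_{C'}(\bd{\lambda})$ for every fixed $\bd{\lambda}\ge\bd{0}$, it does \emph{not} follow that $v(C)=\max_{\bd{\lambda}}H_C(\bd{\lambda})$ is submodular: the pointwise maximum of modular (additive) set functions is not submodular in general. A concrete three-player counterexample in exactly your format: take a scalar $\lambda\in[0,1]$, $h_1(\lambda)=10(1-\lambda)$, $h_2(\lambda)=h_3(\lambda)=10\lambda$, and $-\lambda^t\bd{d}\equiv 0$. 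Then $v(\{1\})=v(\{1,2\})=v(\{1,3\})=10$ but $v(\{1,2,3\})=20$, so $v(\{1,2,3\})+v(\{1\})=30>20=v(\{1,2\})+v(\{1,3\})$, violating submodularity. Every variant you sketch (single maximizer, two separate maximizers, maximizer of the sum) hits the same obstruction: once you write $v(C\cup C')=H_C(\bd{\lambda}^\star)+H_{C'}(\bd{\lambda}^\star)-H_{C\cap C'}(\bd{\lambda}^\star)\le v(C)+v(C')-H_{C\cap C'}(\bd{\lambda}^\star)$, the remaining bound $H_{C\cap C'}(\bd{\lambda}^\star)\le v(C\cap C')$ has the \emph{wrong} sign to conclude. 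This is not a bookkeeping issue; the inequality genuinely fails without structure beyond separability. (The dual decomposition you set up is essentially Owen's LP-game construction, which yields balancedness and hence a non-empty core, but that is strictly weaker than concavity.)

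By contrast, the paper's proof does not pass through duality at all: it works directly with Definition~\ref{def:concavity} and asserts that the marginal $v(C\cup\{n\})-v(C)=\bd{c}_n^t\bd{x}_n^*$, with $\bd{x}_n^*$ the $n$-component of the optimum of \eqref{eq:coal_pb}, is the same for every coalition $C$ not containing $n$, so that the concavity inequality holds with equality. If you want to align with the paper, you should abandon the submodularity-via-$\max_{\bd{\lambda}}$ route and argue directly at the level of marginal contributions.
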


\begin{proof}
Let $C \subseteq \mc{N}^G$. We compute $v(C \cup \{n\}) - v(C) = \bd{c}_n^t \bd{x}_n^*$ where $\bd{x}_n^*$ solves \eqref{eq:coal_pb}, and observe that $v(C' \cup \{n\}) - v(C')=v(C \cup \{n\}) - v(C)$ for any coalition $C' \subset C$. The concavity property introduced in Definition \ref{def:concavity} holds. 
\end{proof}

\begin{corollary} \label{corollary:submodularity}
The characteristic function cost game $G$ is submodular.
\end{corollary}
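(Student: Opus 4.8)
The corollary is essentially a restatement of Proposition~\ref{prop:concavity}: for a characteristic function game, concavity in the sense of Definition~\ref{def:concavity} and submodularity in the sense of Definition~\ref{def:submodularity} are two names for the same property (this equivalence is classical, cf.~\cite{chalkiadakis}, and is already half-recorded in the last sentence of Definition~\ref{def:submodularity}). So the plan is simply to invoke this equivalence and conclude directly from Proposition~\ref{prop:concavity}. For completeness I would include the short, standard argument that bridges the marginal-contribution form and the lattice-inequality form.

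The bridging argument: fix $C, C' \subseteq \mc{N}^G$ and set $\Delta \triangleq C \setminus C' = \{n_1, \dots, n_k\}$. Build the two nested chains $C' \subseteq C' \cup \{n_1\} \subseteq \cdots \subseteq C' \cup \Delta = C \cup C'$ and $(C \cap C') \subseteq (C \cap C') \cup \{n_1\} \subseteq \cdots \subseteq (C \cap C') \cup \Delta = C$, and telescope $v(C \cup C') - v(C')$ and $v(C) - v(C \cap C')$ into sums of consecutive single-player marginal contributions. Since $(C \cap C') \cup \{n_1, \dots, n_{j-1}\} \subseteq C' \cup \{n_1, \dots, n_{j-1}\}$ with $n_j$ outside both, applying Definition~\ref{def:concavity} termwise compares the two telescoped sums and yields the Definition~\ref{def:submodularity} inequality $v(C \cup C') + v(C \cap C') \leq v(C) + v(C')$. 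Alternatively, and more in the spirit of the proof of Proposition~\ref{prop:concavity}, one can short-circuit this entirely: that proof actually shows the marginal contribution $v(S \cup \{n\}) - v(S) = \bd{c}_n^t \bd{x}_n^*$ is independent of the surrounding coalition $S$ (because, by~\eqref{eq:characteristic_fct}, the coalitional value is additively separable once the common feasibility block is accounted for), so $v$ is in fact modular and $v(C \cup C') + v(C \cap C') = v(C) + v(C')$ holds with equality; the submodular inequality is then immediate.

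I do not anticipate a genuine difficulty here, since all the content is carried by Proposition~\ref{prop:concavity}. The only point worth care is the bookkeeping of inequality directions: because $G$ is a cost game, the pairing of ``larger coalition'' with the weak inequality must be lined up exactly with Definitions~\ref{def:submodularity}--\ref{def:concavity} as stated in the paper, rather than with the reward-game phrasing one might recall from memory.
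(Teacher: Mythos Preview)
Your proposal is correct and takes essentially the same approach as the paper: the paper's proof is the one-line observation that Proposition~\ref{prop:concavity} gives concavity and ``every concave game is necessarily submodular~\cite{chalkiadakis}.'' Your write-up simply fleshes out that citation with the standard telescoping argument and the (sharper) modularity observation already implicit in the proof of Proposition~\ref{prop:concavity}.
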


\begin{proof}
This results follows directly from Proposition \ref{prop:concavity}. Indeed, every concave game is necessarily submodular \cite{chalkiadakis}.
\end{proof}

This, hence, allows us to prove the non-emptiness of the core, as shown next. 

\begin{theorem}  \label{theorem:core}
The core of the characteristic function cost game $G$ is non-empty and the grand coalition of $G$ is stable.
\end{theorem}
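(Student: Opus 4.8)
The plan is to invoke the classical result from cooperative game theory that every concave (equivalently, submodular) characteristic function game has a non-empty core, due to Shapley. Since Proposition~\ref{prop:concavity} establishes that $G$ is concave and Corollary~\ref{corollary:submodularity} confirms submodularity, the non-emptiness of $\mc{C}(G)$ follows immediately, and stability of the grand coalition is then a direct restatement: the core being non-empty means there exists a cost allocation $\bd{y}$ with $y(\mc{N}^G)=v(\mc{N}^G)$ and $y(C)\leq v(C)$ for all $C\subseteq\mc{N}^G$, which is precisely the condition that no subset of SOs can profitably deviate from the grand coalition.

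First I would recall (or cite) the standard theorem: for a concave cost game, the core is non-empty; in fact the Shapley value lies in the core. Then I would note that since $G$ is concave by Proposition~\ref{prop:concavity}, this applies directly, so $\mc{C}(G)\neq\emptyset$. Alternatively, if one wishes to be self-contained, one can exhibit a core element constructively via a marginal-contribution (greedy) allocation: fix any ordering of the players, and allocate to each player its marginal cost contribution when added to the set of its predecessors in that order, i.e., $y_n = v(P_n\cup\{n\})-v(P_n)$ where $P_n$ is the set of players preceding $n$. Efficiency is a telescoping sum, and the core inequalities $y(C)\leq v(C)$ follow from submodularity by a short induction on $|C|$. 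Given the explicit structure here, where Proposition~\ref{prop:concavity}'s proof shows $v(C\cup\{n\})-v(C) = \bd{c}_n^t\bd{x}_n^*$ is independent of $C$, this marginal allocation is especially transparent.

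The main obstacle is essentially non-existent at the level of this theorem, since the heavy lifting has already been done in Proposition~\ref{prop:concavity}; the only care needed is in correctly translating "non-empty core" into "stable grand coalition" according to Definition~\ref{eq:TheCore} and the discussion preceding it, i.e., confirming that the coalition structure associated with any core element is the singleton partition $CS=\{\mc{N}^G\}$ and that the feasibility/efficiency conditions in the definition of an outcome are met. If anything, one should double-check that the sign conventions of the cost game (minimization of cost rather than maximization of value) are consistent with the "concave game has non-empty core" statement, which in the cost-game formulation corresponds exactly to the submodular inequality stated in Definition~\ref{def:submodularity}; this is the version Corollary~\ref{corollary:submodularity} provides, so the cited result applies verbatim.

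Thus the proof reduces to two lines: invoke concavity/submodularity from Proposition~\ref{prop:concavity} and Corollary~\ref{corollary:submodularity}, apply the classical non-emptiness-of-the-core theorem for concave games, and conclude that the resulting core allocation certifies that no SO (or subset of SOs) benefits from leaving the grand coalition, hence the grand coalition is stable.
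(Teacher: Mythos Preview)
Your proposal is correct and follows essentially the same approach as the paper: both arguments rest on the concavity/submodularity of $G$ established in Proposition~\ref{prop:concavity} and Corollary~\ref{corollary:submodularity}, and both conclude by invoking the classical fact that for a concave cost game the Shapley value lies in the core, hence the core is non-empty. The paper additionally frames the question through an allocation linear program and mentions the Bondareva--Shapley balancedness criterion before arriving at the same Shapley-value conclusion, but this extra scaffolding is not essential and your more direct route (including the optional marginal-contribution construction) is equally valid.
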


\begin{proof}
Consider the following allocation linear program:\vspace{-0.4cm}

\begin{small}\begin{subequations} \label{eq:AP}
\begin{align}
    \max_{\bd{y}} & \hspace{1.cm} \sum_{n\in\mc{N}^G} y_n, \label{eq:AP1} \\
    s.t. & \hspace{1.cm} \sum_{n\in C} y_n \leq v(C), \forall C \subseteq \mc{N}^G. \label{eq:AP2}
\end{align}
\end{subequations}\end{small}\vspace{-0.2cm}

It is quite obvious that $\mc{C}(G) \neq \emptyset $ if and only if the optimum value of the linear program \eqref{eq:AP} is equal to $v(\mc{N}^G)$, in which case any optimal solution to \eqref{eq:AP} lies in $\mc{C}(G)$. Taking the linear program dual to \eqref{eq:AP}, an equivalent
condition for $\mc{C}(G) \neq \emptyset $ can be obtained based on the concept of balanced sets. A collection $\mc{B}$ of nonempty subsets of $\mc{N}^G$ is balanced if  $\sum_{C \in \mc{B}} \gamma_C.v(C) \leq v(\mc{N})$ holds  for every balanced collection $\mc{B}$ with weights $(\gamma_C)_{C \in \mc{B}}$. A game has a non-empty core if and only if it is balanced \cite{bondareva}.  

The cost game $G$ being submodular from Corollary \ref{corollary:submodularity}, \eqref{eq:AP} is equivalent to:\vspace{-0.4cm}

\begin{small}\begin{subequations} \label{eq:AP_eq}
\begin{align}
    \max_{\bd{y}} & \hspace{1.cm} \sum_{n\in\mc{N}^G} y_n, \\
    s.t. & \hspace{1.cm} \sum_{n\in\mc{N}^G} y_n = \sum_{n\in\mc{N}^G} \Phi_n(\bd{x}_n^*), \\
         & \hspace{1.cm} \bd{x}^* = \arg\min_{\bd{x},\bd{z}} \sum_{n\in\mc{N}^G} \Phi_n(\bd{x}_n), \\
         & \hspace{2.5cm} s.t. \hspace{0.5cm} \bd{A} \bd{x} + \bd{B} \bd{z} \leq \bd{d}, \\
         &  \hspace{3.5cm} \bd{x} \in \mc{X}, \bd{z} \in \mc{Z}.
 \end{align}
\end{subequations}\end{small}\vspace{-0.4cm}

By construction, the Shapley\footnote{The Shapley value is introduced in detail and analyzed in Section~\ref{subsec:Shapley}.} value of the cost game $G$ is solution to \eqref{eq:AP_eq} and as such, belongs to the core of the game \cite{shapley, chalkiadakis}. This implies that the core of $G$ is never empty. 
\end{proof}

Theorem \ref{theorem:core}, hence, proves that it is beneficial for all DSOs and TSO to cooperate in a common market. In the derivation, $c_n$ was used in the cost function of the players, which would reflect a cost-based market clearing or a pay-as-bid mechanism. However, Theorem \ref{theorem:core} can be readily extended beyond pay-as-bid for other pricing schemes, such as nodal pricing (i.e., locational marginal pricing) as shown next.





\begin{proposition} \label{prop:pricing}
The core of the cost game $G$ is non-empty under nodal pricing.
\end{proposition}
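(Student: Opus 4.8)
The plan is to mirror the three-step architecture already used for the pay-as-bid case (Proposition \ref{prop:concavity}, Corollary \ref{corollary:submodularity}, Theorem \ref{theorem:core}), after redefining the characteristic function so that each SO is charged at the nodal (locational marginal) prices rather than at its submitted bids. Concretely, for a coalition $C\subseteq\mc{N}^G$ I would let $(\bd{x}^*_C,\bd{z}^*_C)$ and $\bd{\lambda}^*_C$ be a primal--dual optimal pair of the coalition problem \eqref{eq:coal_pb}, extract from $\bd{\lambda}^*_C$ the multipliers $\lambda^T_n$ and $\lambda^{n^T}_n$ of the nodal energy-balance equations \eqref{eq:NodeBalanceBalT_OnlyT_v1}, \eqref{eq:NodeBalanceBalT_OnlyT_v1_atDSONodes}, \eqref{eq:LinDistFlow1_OnlyD_Lin1}, \eqref{eq:LinDistFlow1_root_OnlyD_Lin1}, and define the nodal-pricing value $v_{\mathrm{LMP}}(C)$ as the sum over $n\in C$ of the settlement SO $n$ pays its FSPs at those prices. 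Because each such settlement term depends only on SO $n$'s own activated volumes, $v_{\mathrm{LMP}}$ is still a sum of per-player terms --- exactly the feature that made the TU structure and the separable bounds in Propositions \ref{prop:DSO_TSO_coop}--\ref{prop:concavity} go through.

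Next I would re-run the concavity argument. The coalition problem \eqref{eq:coal_pb} has the same block structure for every $C$: passing from $C$ to $C\cup\{n\}$ only appends SO $n$'s variable block $\bd{x}_n$, its private feasibility set $\mc{X}_n$, its columns $\bd{A}_n,\bd{B}_n$ in the coupling constraint \eqref{eq:compact2}, and its term in the separable objective. Writing $v_{\mathrm{LMP}}(C\cup\{n\})-v_{\mathrm{LMP}}(C)$ through the relaxed (saddle-point) formulation \eqref{eq:relax} exactly as in the proof of Proposition \ref{prop:DSO_TSO_coop}, the marginal contribution of $n$ is the value of $n$'s own penalized inner minimization priced at the prevailing multipliers; its monotonicity in $C$ follows by replaying the argument behind Proposition \ref{prop:concavity}, since the block structure of \eqref{eq:coal_pb} is identical for $v$ and $v_{\mathrm{LMP}}$ and only the per-player objective coefficients have changed. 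Hence $(\mc{N}^G,v_{\mathrm{LMP}})$ is concave and, by the reasoning of Corollary \ref{corollary:submodularity}, submodular.

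Finally I would invoke balancedness exactly as in Theorem \ref{theorem:core}: the allocation LP \eqref{eq:AP} written with $v_{\mathrm{LMP}}$ reduces, by submodularity, to the grand-coalition feasibility problem \eqref{eq:AP_eq}, and the Shapley value of $(\mc{N}^G,v_{\mathrm{LMP}})$ is a feasible hence optimal solution to it, so it lies in the core \cite{shapley,bondareva}. Therefore $\mc{C}(G)\neq\emptyset$ under nodal pricing, which is the claim.

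The step I expect to be the real obstacle is establishing submodularity of $v_{\mathrm{LMP}}$, because the nodal prices are endogenous dual variables that move with the coalition: unlike the fixed bid vector $\bd{c}_n$, the per-SO settlement $\bd{\lambda}^{*}_{C}{}^t(\cdot)_n$ depends on $C$ both through the quantities and through the prices. Making the marginal-contribution comparison rigorous therefore requires either an LP-sensitivity / complementary-slackness argument showing the settlement-based marginal contribution inherits the monotonicity of the as-bid one, or an appeal to strong duality to rewrite $v_{\mathrm{LMP}}(C)$ in a form where the coalition dependence is again confined to a separable minimization over the private sets $\mc{X}_n$; pinning down that reduction is where the care is needed. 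Everything downstream --- balancedness and membership of the Shapley value in the core --- is then a verbatim repeat of Theorem \ref{theorem:core}.
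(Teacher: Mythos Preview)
Your proposal follows essentially the same route as the paper: redefine the characteristic function with nodal prices in place of the bid vectors, re-run the concavity argument of Proposition~\ref{prop:concavity}, and conclude non-emptiness of the core via the chain Corollary~\ref{corollary:submodularity}--Theorem~\ref{theorem:core}. The paper's proof is much terser and sidesteps the endogeneity concern you flag by simply treating $\bd{\lambda}_n$ as a fixed replacement for $\bd{c}_n$ in the objective of~\eqref{eq:coal_pb}, so that the concavity proof carries over verbatim without the LP-sensitivity or strong-duality reduction you anticipate needing.
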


\begin{proof}
Under nodal pricing, the cost game $G$ coalitional value becomes $v(C) = \sum_{n\in C} \bd{\lambda}_n^t \bd{x}_n^* \; \textrm{where $C \subseteq \mc{N}^G$},$ and $(\bd{x}_n^*)_{n\in C}$ is the optimum of \eqref{eq:coal_pb} where $\Phi_n(\bd{x}_n)$ is replaced with $\bd{\lambda}_n^t \bd{x}_{n}$. 

Using the same reasoning as in the proof of Proposition~\ref{prop:concavity}, it is straightforward to check that cost game $G$ is concave under both pricing schemes, rendering the core of the cost game $G$ non-empty under nodal pricing.
\end{proof}

After proving the stability of the cooperation of the SOs in the common market, we next present several cost allocation methods --  based on which the total cost of flexibility procurement in the common market can be split -- and analytically characterize their properties.

\section{Allocation Mechanisms} \label{sec:allocation_mechanisms}
We introduce several cost allocation mechanisms and study their properties for our TSO-DSOs cooperative game, 
based on efficiency, stability, and fairness criteria, which measure how well each SO allocated cost reflects its contribution to the total cost. 
%
%
%
%
The properties for evaluating a cost allocation mechanism, $\bd{\Phi}(G)$, are defined as follows:

(i) \textbf{efficiency}: $\sum_{n} \Phi_n(G) = v(\mc{N}^G)$.

(ii) \textbf{dummy player}: if a player $n$ is a dummy in $G$, i.e., $v(C \cup \{n\})=v(C), \forall C \subseteq \mc{N}^G \setminus \{n\}$, then $\Phi_n(G)=0$.

(iii) \textbf{symmetry} (equal treatment of equals): if $n$ and $n'$ are equivalent in $G$, in the sense that $v(C \cup \{n\})=v(C \cup \{n'\}), \forall C \subseteq \mc{N}^G$, then $\Phi_n(G)=\Phi_{n'}(G)$.

(iv) \textbf{additivity}: $\Phi_n(G+\widetilde{G})=\Phi_n(G)+\Phi_n(\widetilde{G}), \forall n \in \mc{N}^G$.

(v) \textbf{stability}: the cost allocation $\bd{\Phi}(G)$ belongs to the core. 

(vi) \textbf{anonymity}: players' relabeling does not affect their cost allocation. 
If $n$ and $n'$ are two players, and game $\widetilde{G}$ is identical to $G$ except for exchanging the roles of $n$ and $n'$, then $\Phi_n(G)\textrm{$=$}\Phi_{n'}(\widetilde{G})$.
Note that (vi) implies (iii).

The studied cost allocation mechanisms are defined next.

\subsection{Shapley Value (SV)}\label{subsec:Shapley}

The Shapley value is a solution concept typically formulated with respect to the grand coalition: it defines a way of distributing the value $v(\mc{N}^G)$ that could be obtained by the grand coalition. The SV is based on the intuition that the cost allocated to each agent (in our case to each SO) should be \emph{proportional to her contribution} to the grand coalition. Define $\mc{P}_{\mc{N}^G}$ as the set of permutations, e.g., one-to-one mappings from $\mc{N}^G$ to itself. We introduce $S_p(n)$ as the set of all the predecessors of $n$ in $p \in \mc{P}_{\mc{N}^G}$, i.e., $S_p(n)\triangleq \{n'\in\mc{N}^G|p(n')<p(n)\}$ where $<$ denotes the predecessor relationship. The SV of SO $n$ is denoted $\textrm{SV}_n(G)$ and is given by $\textrm{SV}_n(G) \triangleq \frac{1}{N^G}\sum_{p \in \mc{P}_{\mc{N}^G}} \Delta_p(n)$ where $\Delta_p(n)\triangleq v\big(S_p(n)\cup \{n\}\big)-v\big(S_p(n)\big)$ measures the marginal contribution of $n$ with respect to a permutation $p$. It can equivalently be written under the extended form: $\textrm{SV}_n(G) = \sum_{C \subseteq \mc{N}^G \setminus \{n\}} \frac{|C|!(N^G-|C|-1)!}{N^G!} \Big[ v(C\cup \{n\})-v(C)\Big]$.

For each permutation (ordering) of the SOs, each SO is imputed a cost based on how much the SO contributes to the coalition formed by its predecessors in this permutation. The allocated cost is averaged over all possible permutations to guarantee the symmetry of the allocation.

By construction, $\bd{\textrm{SV}}(G)$ meets properties (i)-(iv)~\cite{shapley}. In fact, the SV is the \emph{only} cost allocation method that has the four properties (i)-(iv) simultaneously. The anonymity property (vi) is also met by the SV, meaning that the SV does not discriminate between the SOs solely on the basis of their indices~\cite{hart}. Finally, following the proof of Theorem~\ref{theorem:core}, property (v) holds true for our cost game $G$. The main challenge of the SV is in its computational complexity (NP - complete).  
The complexity of the SV also hinders the interpretation of its fairness.


\subsection{Normalized Banzhaf Index (B\textsuperscript{\#})}

Like the SV, the Banzhaf index $\bd{\textrm{B}}(G)$ measures the agents' expected marginal contributions; however, instead of averaging over all permutations of players, it averages over all coalitions in the game. The Banzhaf index of an SO $n$ is denoted $B_n(G)$ and is given by $B_n(G) \triangleq \frac{1}{2^{N^G-1}}\sum_{C \subseteq \mc{N}^G\setminus \{n\}} \Big[ v(C\cup \{n\})-v(C)\Big]$.
$\bd{\textrm{B}}(G)$ meets properties (ii)-(iv) \cite{chalkiadakis}. Similarly to the SV, it meets also the anonymity property (vi). Because it lacks the efficiency property (i), $\bd{\textrm{B}}(G)$ is not in the core.
To meet the efficiency property (i), a rescaled version of the Banzhaf index has been proposed, called the normalized Banzhaf index: $B_n^{\sharp}(G) \triangleq \frac{B_n(G)}{\sum_{n'\in\mc{N}^G}B_{n'}(G)}$.
The normalized Banzhaf index meets properties (i)-(iii); however it loses (iv) \cite{chalkiadakis}. Similarly to Banzhaf index, it meets property (vi). We next prove that the normalized Banzhaf index leads to a stable cost allocation. 

\begin{proposition} \label{prop:Banzhaf_convex_game}
The normalized Banzhaf index $\bd{B}^{\sharp}(G)$ meets the stability property (v) for the cost game $G$.
\end{proposition}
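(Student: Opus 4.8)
The plan is to exploit the additive structure of $G$ that was already brought to light inside the proof of Proposition~\ref{prop:concavity}: there it was observed that a player's marginal contribution does not depend on the coalition it joins. Once that fact is in hand, the (normalized) Banzhaf index simply coincides with the Shapley value, whose membership in $\mc{C}(G)$ is already granted by Theorem~\ref{theorem:core}, and the claim follows immediately.

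First, I would record the key identity from the proof of Proposition~\ref{prop:concavity}: for every player $n \in \mc{N}^G$ and every coalition $C \subseteq \mc{N}^G \setminus \{n\}$, the marginal contribution $v(C \cup \{n\}) - v(C) = \bd{c}_n^t \bd{x}_n^*$ is one and the same number, which I denote $m_n$. Equivalently, this makes $v$ additive, with $v(C) = \sum_{n \in C} m_n$ for every $C \subseteq \mc{N}^G$ and $m_n = v(\{n\})$; in particular, telescoping the marginal contributions along any fixed ordering of $\mc{N}^G$ gives $\sum_{n \in \mc{N}^G} m_n = v(\mc{N}^G)$.

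Next, I would substitute $m_n$ into the defining averages. The Banzhaf index $B_n(G)$ is the arithmetic mean of the marginal contributions $v(C \cup \{n\}) - v(C)$ over all $C \subseteq \mc{N}^G \setminus \{n\}$, so it collapses to $m_n$; likewise, the Shapley value $\textrm{SV}_n(G)$ is an average of the marginal contributions $\Delta_p(n)$ over all permutations $p \in \mc{P}_{\mc{N}^G}$, each of which equals $m_n$, hence $\textrm{SV}_n(G) = m_n$ as well. Therefore $B_n(G) = \textrm{SV}_n(G)$ for every $n \in \mc{N}^G$, and $\sum_{n \in \mc{N}^G} B_n(G) = \sum_{n \in \mc{N}^G} m_n = v(\mc{N}^G)$, so the rescaling that defines $B_n^{\sharp}(G)$ — which forces the components to sum to $v(\mc{N}^G)$ — leaves the index untouched, giving $B_n^{\sharp}(G) = B_n(G) = \textrm{SV}_n(G)$ for all $n$.

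Finally, Theorem~\ref{theorem:core} guarantees $\bd{\textrm{SV}}(G) \in \mc{C}(G)$; combined with $\bd{B}^{\sharp}(G) = \bd{\textrm{SV}}(G)$ this yields $\bd{B}^{\sharp}(G) \in \mc{C}(G)$, which is exactly the stability property (v) for the cost game $G$. The only real subtlety — and the step I would treat as the crux — is the coalition-independence of the marginal contributions, and it is genuinely needed: for a merely submodular cost game the normalized Banzhaf index can fail to lie in the core, so the result holds here only because the flexibility-procurement game is additive, a fact inherited from Proposition~\ref{prop:concavity} rather than from concavity alone.
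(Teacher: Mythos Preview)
Your argument is correct and takes a genuinely different route from the paper's. The paper argues that the normalized Banzhaf index is a convex combination of the marginal vectors that, for a concave game, lie in the core (citing the standard construction in \cite{chalkiadakis}), and then invokes convexity of the core. You instead exploit the stronger fact actually established in the proof of Proposition~\ref{prop:concavity}---that every marginal contribution $v(C\cup\{n\})-v(C)$ equals the same constant $m_n$---to conclude that the game is additive, whence $\bd{B}(G)=\bd{\textrm{SV}}(G)$ and the normalization leaves $\bd{B}^{\sharp}(G)=\bd{\textrm{SV}}(G)$, so Theorem~\ref{theorem:core} finishes. Your approach is more elementary and yields the sharper statement $\bd{B}^{\sharp}(G)=\bd{\textrm{SV}}(G)$; it also pinpoints, correctly, that concavity alone would not suffice (the normalized Banzhaf index can lie outside the core of a concave game), a subtlety the paper's convex-combination sketch does not fully resolve. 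The paper's line of reasoning, by contrast, is phrased at the level of concave games in general, so it would transfer verbatim if the additive structure were weakened---provided the convex-combination claim can be made rigorous for the game at hand. One minor caveat: the paper's literal formula $B_n^{\sharp}(G)=B_n(G)/\sum_{n'}B_{n'}(G)$ sums to $1$ rather than to $v(\mc{N}^G)$; your reading (rescaling so that the components sum to $v(\mc{N}^G)$) is the one consistent with efficiency, and under additivity the two readings differ only by the harmless global factor $v(\mc{N}^G)$.
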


\begin{proof}
By construction, the normalized Banzhaf index is efficient. We need to check the non-deviation property, i.e., that $\sum_{n\in C} B_n^{\sharp}(G) \leq v(C), \forall C \subseteq \mc{N}^G$. We notice that the normalized Banzhaf index is a convex combination of elements that are constructed in the proof of Th.2.27 in~\cite{chalkiadakis} to exhibit elements from the core of $G$. Since these elements belong to the core by construction, the normalized Banzhaf index is itself a convex combination of these elements, and the core of the cost game $G$ can be shown to be a convex set. Hence, the normalized Banzhaf index meets the non-deviation property, which implies that it belongs to the core of $G$.
\end{proof}

Similarly to the SV, both versions of Banzhaf index share exponential computational complexity rates, implying challenges for implementations on a large scale and no simple interpretation of fairness.

\subsection{Cost Gap Allocation Method (CGA)}

This method coincides with the $\tau$-value, introduced in
~\cite{tijs}. Similarly to the SV, we define $\bd{\Delta} = \big(\Delta(n)\big)_n$ as the marginal cost vector. Its $n$-th coordinate is the separable cost of SO $n$, i.e., $\Delta(n) \triangleq v(\mc{N}^G)-v(\mc{N}^G\setminus \{n\}), \forall n \in \mc{N}^G$. 
For each coalition $C \subseteq \mc{N}^G$, we define the cost gap of $C$ by: \vspace{-0.4cm}

\begin{small}\begin{align*}
\left\{
    \begin{array}{ll}
        g(C)  \triangleq  \hspace{0.2cm} v(C) - \sum_{n\in C} \Delta(n) & \textrm{if $C\neq \emptyset$}, \\ 
        g(\emptyset)  \triangleq  0.
        \end{array}
\right.
\end{align*}\end{small}\vspace{-0.4cm}

The map $g:2^{\mc{N}^G} \rightarrow \mathbb{R}$ is the cost gap function of game $G$. Note that $g(\mc{N}^G)$ is equal to the non separable cost in $G$. In general, we consider $g \geq 0$. 
We define the weight vector $\bd{w}$ such that: $w_n \triangleq \min_{\{C | n \in C\}} g(C), \forall n \in \mc{N}^G$.
For any characteristic function $v$ such that $g(C) \geq 0, \forall C \subseteq \mc{N}^G$ and $\sum_{n\in\mc{N}^G} w_n \geq g(\mc{N}^G)$, the cost gap allocation method assigns the cost allocation:\vspace{-0.4cm}

\begin{small}\begin{align*}
\bd{y} \triangleq & 
\left\{
    \begin{array}{ll}
        \bd{\Delta} & \textrm{if $g(\mc{N}^G)=0$}, \\
        \bd{\Delta} + g(\mc{N}^G)(\sum_{n\in\mc{N}^G}w_n)^{-1}\bd{w} & \textrm{if $g(\mc{N}^G)>0$.}
    \end{array}
\right.
\end{align*}\end{small}\vspace{-0.4cm}

The cost gap allocation $\bd{\textrm{CGA}}(G)$ meets the efficiency (i), dummy player (ii), anonymity (vi) and, therefore, symmetry (iii) properties \cite{tijs}. $\bd{\textrm{CGA}}(G)$ is stable for $N^G<4$, but can lead to unstable outcomes for $4 \geq N^G$ \cite{tijs}. 
On the other hand, $\bd{\textrm{CGA}}(G)$ gives rise to an exact analytical expression, and therefore to a simple interpretation of fairness.

\subsection{Lagrangian Based Allocation (L)}

This method is an extension of the classical shadow price based cost split \cite{frisk}, with which it coincides when weak duality holds for the cost game $G$. In \eqref{eq:compact_model}, we get the dual $\bd{\lambda}$ for the coupling constraint \eqref{eq:compact2}, and duals $\bd{\mu}_0$, $(\bd{\mu}_n)_n$ for the individual constraints \eqref{eq:compact3}, \eqref{eq:compact4}. When solving \eqref{eq:compact_model} for the grand coalition, we get $v(\mc{N}^G)$. The optimal dual solution has the property that:\vspace{-0.4cm}

\begin{small}\begin{equation} \label{eq:shadow_price}
v(\mc{N}^G)\textrm{=}\,\bd{\lambda}^t (\bd{A}\bd{x}\textrm{$+$}\bd{B}\bd{z}\textrm{$-$}\bd{d})\textrm{$+$}\bd{\mu}_0^t\Psi_0(\bd{x}_0)\!\textrm{$+$}\!\!\sum_n \!\bd{\mu}_n^t\Psi_n(\bd{x}_n).
\end{equation}\end{small}\vspace{-0.4cm}

Since our problem is linear, each SO's contribution can be found by computing its contribution to the dual objective function value \cite{frisk}. This cost allocation is efficient (i) under weak duality \cite{owen}. Stability (v) is also achieved since efficiency (i) holds and the individual rationality condition is met by definition of the cost allocation and weak duality. From \eqref{eq:shadow_price}, dummy player (ii) and additivity (iv) hold. However, symmetry (iii) and anonymity (vi) do not hold in general. 

Lagrangian based cost allocation $\bd{\textrm{L}}(G)$ requires the computation of the grand coalition optimal value (e.g., solving a linear optimization program) and all the dual variables associated with the constraints in \eqref{eq:compact_model}. Making use of a solver, $\bd{\textrm{L}}(G)$ implementation is simple and leads to a rather intuitive interpretation of fairness.

\subsection{Equal Profit Method (EPM)}

The motivation behind this cost allocation is to propose a method which is aimed at finding a stable cost allocation, such that the maximum difference in pairwise relative savings is minimized. We call this method, the Equal Profit Method (EPM)~\cite{frisk}. The relative savings of SO $n$ is, then, computed as $\frac{v(\{n\})-y_n}{v(\{n\})} = 1- \frac{y_n}{v(\{n\})}$. When a cost allocation is stable, $v(\{n\}) \geq y_n$. This allocation is obtained using the solution of a linear optimization problem:\vspace{-0.4cm}

\begin{small}\begin{subequations}
\begin{align}
\min & \hspace{1.cm} f, \\
s.t. & \hspace{1.cm} f \geq \frac{y_n}{v(\{n\})}-\frac{y_{n'}}{v(\{n'\})}, \forall n,n', \label{eq:EP0} \\
& \hspace{1.cm} \sum_{n' \in C} y_{n'} \leq v(C), \forall C \subseteq \mc{N}^G, \label{eq:EP1} \\
& \hspace{1.cm} \sum_{n'\in\mc{N}^G} y_{n'} = v(\mc{N}^G), y_n \geq 0, \forall n. \label{eq:EP2}
\end{align}
\end{subequations}\end{small}\vspace{-0.2cm}

EPM belongs to the core of the cost game $G$ by construction, as \eqref{eq:EP1}, \eqref{eq:EP2} define efficiency and individual rationality\footnote{A cost allocation vector $y$ for a coalition structure $CS$ satisfies the individual rationality property if $y_n \leq v(\{n\}), \forall n \in \mc{N}^G$, i.e., each SO weakly pefers being in the coalition structure to being on his own.} respectively. This means that it meets the efficiency (i) and stability (v) properties. The symmetry property (iii) is naturally met from \eqref{eq:EP0} at the optimum. However, additivity (iv), anonymity (vi) and dummy player (ii) are not met in general.  
EPM requires the optimal solution of a linear constrained optimization problem. Making use of a solver, its implementation is simple and leads to an intuitive interpretation of fairness.

\subsection{Proportional Cost Allocation (PCA)}

A straightforward allocation is to distribute the total cost of the grand coalition (i.e., the total cost of the common market), $v(\mc{N}^G)$, among the SOs according to how much flexibility is used by each of them. This is expressed by $y_n = w_n.v(\mc{N}^G)$, where $w_n$ is equal to SO $n$'s share of the total activated flexibility, and $w_n\geq 0, \forall n \in \mc{N}^G, \sum_{n'\in\mc{N}^G}w_{n'}=1$, or, alternatively in case stand-alone costs are used, it is equal to $\frac{v(\{n\})}{\sum_{n'\in\mc{N}^G}v(\{n'\})}$ \cite{frisk}. 

The proportional cost allocation method is easy to understand, implement, and compute. Besides, it is stable if and only if $\sum_{n\in C} w_n \leq \frac{v(C)}{v(\mc{N}^G)}, \forall C \subseteq \mc{N}^G$. In case stand-alone costs are used (we refer to this specifically as PCA), the proportional allocation is stable, meeting (v). The efficiency property (i) is always met due to the normalization of the weights. The symmetry property (iii) also holds, but neither anonymity (vi) nor dummy player (ii) hold. 

As PCA gives rise to an exact analytical expression, it yields a simple, intuitive interpretation of fairness.

A comparison of the properties of the different proposed cost allocation methods is provided in Table \ref{tab:comparison_allocations}.

\begin{table*}
 \caption{{\small Properties of the Cost Allocation Methods for cost game $G$.}}
\label{tab:comparison_allocations}
\begin{tabularx}{\textwidth}{@{}l*{10}{C}c@{}}
\toprule
Properties & \textbf{SV} & \textbf{$\bd{B}^{\sharp}$} & \textbf{CGA} & \textbf{L} & \textbf{EPM} & \textbf{PCA} \\\hline
Efficiency & $\checkmark$ & & $\checkmark$ & $\checkmark$ & $\checkmark$ & $\checkmark$ \\
Dummy player & $\checkmark$ & $\checkmark$ & $\checkmark$ & $\checkmark$ & &  \\
Symmetry & $\checkmark$ & $\checkmark$ & $\checkmark$ &  & $\checkmark$ & $\checkmark$ \\
Additivity & $\checkmark$ & $\checkmark$ & & $\checkmark$ & & $\checkmark$ \\
Stability & $\checkmark$ & & $\checkmark$ if $N^G<4$ & $\checkmark$ & $\checkmark$ & $\checkmark$ \\
Anonymity & $\checkmark$ & $\checkmark$ & $\checkmark$ & & & \\\hline
Complexity & NP-complete & NP-complete & exact & $O(N^G)$ & $O(N^G)$ & exact \\\hline
Simplicity & & & $\checkmark$ & $\checkmark$ & $\checkmark$ & $\checkmark$ \\
\bottomrule
\end{tabularx}
\end{table*}

\section{Numerical Results}\label{sec:NumericalResults}



For the numerical results, we consider an interconnected system composed of the IEEE 14-bus transmission network connected to three distribution networks -- namely, the Matpower 18-bus, 69-bus, and 141-bus systems. 
All network parameters are based on the corresponding cases in Matpower~\cite{matpower}. 
We add base demand to the buses and adapt the capacity limit of lines, so that the initial system state (without any flexibility activation) show anticipated congestions and system imbalance. 
Flexibility bids for both upward and downward flexibility are created over the different nodes. The submitted bids are drawn from a uniform distribution in the range of $[10, 15]$\euro/MWh for downward bids and $[50, 55]$ \euro/MWh for upward offers, not to induce biases stemming from the submitted bids/offers.

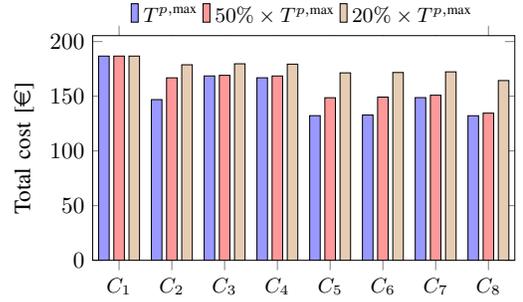
\begin{figure}
	\centering
	\resizebox{0.8\columnwidth}{!}{
		\begin{tikzpicture}

\begin{axis}[
    ylabel={Total cost [\euro]},
    xtick = {0,1,2,3,4,5,6,7},
    xticklabels={$C_1$, $C_2$, $C_3$, $C_4$, $C_5$, $C_6$, $C_7$, $C_8$},
    ymin=0,
    xmin=-0.5,
    xmax=7.5,
    ybar,
    width=8cm, 
    height=5cm,
    bar width=0.2,
    legend style={
    	font=\small, 
    	draw=none,
    	legend columns=3,
    	at={(0.5,1)},
    	anchor=south
    },
	legend cell align={left},
    every x tick label/.style={font=\small},
]

    \addplot[fill=blue!40] table[x=coalition, y=ResultFirst, col sep=semicolon]{figures/incremental_benefit_data.csv};
    \addlegendentry{$T^{p,\textrm{max}}$}
    \addplot[fill=red!40] table[x=coalition, y=ResultSecond, col sep=semicolon]{figures/incremental_benefit_data.csv};
    \addlegendentry{$50\%\times T^{p,\textrm{max}}$}
    \addplot[fill=brown!40] table[x=coalition, y=ResultThird, col sep=semicolon]{figures/incremental_benefit_data.csv};
    \addlegendentry{$20\%\times T^{p,\textrm{max}}$}
\end{axis}

\end{tikzpicture}
	}
	\caption{{\small Total cost reduction when different coalitions are formed in three cases of interface flow limit, where:} {\small $C_1$=\big\{\{TN\}, \{DN$_{18}$\}, \{DN$_{69}$\}, \{DN$_{141}$\}\big\}, $C_2$=\big\{\{TN, DN$_{18}$\}, \{DN$_{69}$\}, \{DN$_{141}$\}\big\}, $C_3$=\big\{\{TN, DN$_{141}$\}, \{DN$_{18}$\}, \{DN$_{69}$\}\big\}, $C_4$=\big\{\{TN, DN$_{69}$\}, \{DN$_{18}$\}, \{DN$_{141}$\}\big\}, $C_5$=\big\{\{TN, DN$_{18}$, DN$_{69}$\}, \{DN$_{141}$\}\big\}, $C_6$=\big\{\{TN, DN$_{18}$, DN$_{141}$\}, \{DN$_{69}$\}\big\}, $C_7$=\big\{\{TN, DN$_{69}$, DN$_{141}$\}, \{DN$_{18}$\}\big\}, $C_8$=\big\{\{TN, DN$_{18}$, DN$_{69}$, DN$_{141}$\}\big\}}.\vspace{-0.4cm}  		 
	}
	\label{fig:incremental_benefit}
\end{figure}

We first showcase the benefit of cooperation in reducing the total system costs. Fig.~\ref{fig:incremental_benefit} shows the incremental benefit of cooperation when adding additional DSOs up to the grand coalition as well as the impact that the interface flow limits have on the cooperation benefit. We consider first each SO alone (i.e. disjoint markets), then incrementally add one DSO to the coalition and compute the total costs. The coalition formed are denoted $C_1$ to $C_8$ (defined in Fig.~\ref{fig:incremental_benefit}), where $C_1$ represents the case of disjoint markets (i.e., singleton SOs) and $C_8$ the case of the grand coalition forming a common market. The process is repeated for three different levels of maximum interface flow limits, where the $50\%\times T^{p,\textrm{max}}$ and $20\%\times T^{p,\textrm{max}}$ cases reduce the interface flow maximum limit for each DSO to, respectively, $20\%$ and $50\%$ of its original value, $T^{p,\textrm{max}}$.  
As can be seen from Fig.~\ref{fig:incremental_benefit}, adding more DSOs in the cooperation further reduces the total costs. For example, by comparing $C_1$, $C_2$, $C_5$ and $C_8$, we can observe the way the sequential addition of DN$_{18}$, DN$_{69}$, and DN$_{141}$ to the TSO (TN) led to a significant decrease in total system costs for all values of interface flow limits ($C_8$ achieves 29\%, 28\%, and 12\% reduction with respect to $C_1$ for, respectively, $T^{p,\textrm{max}}$, $50\%\times T^{p,\textrm{max}}$, and $20\%\times T^{p,\textrm{max}}$). This also highlights the effect of allowing a higher level of interface power exchange on achieving a more efficient procurement of flexibility, as out of the three interface flow limits, the $20\%\times T^{p,\textrm{max}}$ case achieves the least amount of savings. 

Indeed, Fig.~\ref{fig:incremental_benefit} further showcases the way in which a tighter interface flow limit reduces the benefits introduced by cooperation, by showcasing the increased total cost under each coalition for the three limits $T^{p,\textrm{max}}$, $50\%\times T^{p,\textrm{max}}$, and $20\%\times T^{p,\textrm{max}}$. However, the impacts of the interface flow limits on different DSOs vary significantly. By comparing bars of the same color in $C_2$, $C_3$ and $C_4$, we can see that $C_2$ shows the most severe variation in costs for a tighter interface flow limit. This reflects the fact that DN$_{18}$ is affected more significantly, as in this case study, DN$_{18}$ has the largest contribution to the total cost reduction in the common market. Hence, when its flow with the TSO is more limited, the benefit from cooperation significantly decreases.

In addition to the flexibility bids (prices and quantities) submitted from a given distribution network, the location of the distribution network itself within the system (i.e., the node in the transmission system to which the distribution network is connected) also plays a key role in its contribution to the cost reduction of the coalition. For example, we consider a case in which we place three identical 18-bus distribution networks including the same set of bids (denoted DN$_A$, DN$_B$, and DN$_C$) at different buses of the transmission network and evaluate the total costs when each DSO cooperates with the TSO. The numerical results are shown in Table~\ref{tab:cost_comparison}. Even though the distribution systems and the set of submitted bids from those systems are the same, the resulting cooperation-induced reduction in system costs are different -- as shown Table~\ref{tab:cost_comparison} -- due to the location of transmission system congestions.
As shown Table~\ref{tab:cost_comparison}, the total cost is reduced by 10.2\% when the TSO cooperates with DN$_B$ (third row in Table~\ref{tab:cost_comparison}), but this saving drops to only 6.4\% when the TSO cooperates with DN$_C$ (fourth row in Table~\ref{tab:cost_comparison}).

\begin{small}\begin{table}
	\caption{Total costs when identical DSOs located at different buses cooperate with the TSO.}
	\label{tab:cost_comparison}
	\centering
	\begin{tabular}{l c} 
		\toprule
		Coalition & Total Costs [\euro{}] \\ 
		\hline
		\big\{\{TN\}, \{DN$_{A}$\}, \{DN$_{B}$\}, \{DN$_{C}$\}\big\} & 43.15  \\ 
		\big\{\{TN, DN$_{A}$\}, \{DN$_{B}$\}, \{DN$_{C}$\}\big\} & 39.81  \\ 
		\big\{\{TN, DN$_{B}$\}, \{DN$_{A}$\}, \{DN$_{C}$\}\big\} & 38.74  \\ 
		\big\{\{TN, DN$_{C}$\}, \{DN$_{A}$\}, \{DN$_{B}$\}\big\} & 40.4 \\
		\hline
	\end{tabular}\vspace{-0.3cm}
\end{table}\end{small}



\begin{figure}
	\centering
	\begin{subfigure} {0.8\columnwidth}
		\centering
		\resizebox{\columnwidth}{!}{
	
			\begin{tikzpicture}

\begin{axis}[
    ylabel={Cost [\euro]},
    xtick = {0,1,2,3,4,5},
    xticklabels={SV, B, CGA, PCA, EMP, L},
	xmajorticks=false,
    ymin=0,
    xmin=-0.5,
    xmax=5.5,
    ybar stacked,
    width=8cm, 
    height=5cm,
    bar width=0.8,
    nodes near coords,
    legend style={
    	font=\small, 
    	draw=none,
    	legend columns=4,
    	at={(0.5,1)},
    	anchor=south
    },
	legend cell align={left},
    every x tick label/.style={font=\small, rotate=0,anchor=east},
	every node near coord/.style={font=\tiny},
]

    \addplot[fill=blue!30] table[x=method, y=TN, col sep=semicolon]{figures/cost_allocation_tp_one_data.csv};
    \addlegendentry{TN}
    \addplot[fill=green!30] table[x=method, y=DN_141, col sep=semicolon]{figures/cost_allocation_tp_one_data.csv};
    \addlegendentry{DN$_{141}$}
    \addplot[fill=red!30] table[x=method, y=DN_69, col sep=semicolon]{figures/cost_allocation_tp_one_data.csv};
    \addlegendentry{DN$_{69}$}
    \addplot[fill=yellow!50] table[x=method, y=DN_18, col sep=semicolon]{figures/cost_allocation_tp_one_data.csv};
    \addlegendentry{DN$_{18}$}
\end{axis}

\end{tikzpicture}
		}
		\vspace{-1.8\baselineskip}
		\caption{\footnotesize $T^{p,\textrm{max}}$}
		\label{fig:cost_allocation_tp_one}
	\end{subfigure}
	\vspace{0.5\baselineskip}
	\begin{subfigure} {0.8\columnwidth}
		\centering
		\resizebox{\columnwidth}{!}{
			
			\begin{tikzpicture}

\begin{axis}[
    ylabel={Cost [\euro]},
    xtick = {0,1,2,3,4,5},
    xticklabels={\textbf{SV}, \bf{B\textsuperscript{\#}}, \textbf{CGA}, \textbf{PCA}, \textbf{EMP}, \textbf{L}},
    ymin=0,
    xmin=-0.5,
    xmax=5.5,
    ybar stacked,
    width=8cm, 
    height=5cm,
    bar width=0.8,
    nodes near coords,
    legend style={
    	font=\tiny, 
    	draw=none,
    	legend columns=4,
    	at={(0.5,1)},
    	anchor=south
    },
	legend cell align={left},
    every x tick label/.style={font=\small},
	every node near coord/.style={font=\tiny},
]

    \addplot[fill=blue!30] table[x=method, y=TN, col sep=semicolon]{figures/cost_allocation_tp_three_data.csv};
    \addplot[fill=green!30] table[x=method, y=DN_141, col sep=semicolon]{figures/cost_allocation_tp_three_data.csv};
    \addplot[fill=red!30] table[x=method, y=DN_69, col sep=semicolon]{figures/cost_allocation_tp_three_data.csv};
    \addplot[fill=yellow!50] table[x=method, y=DN_18, col sep=semicolon]{figures/cost_allocation_tp_three_data.csv};
\end{axis}

\end{tikzpicture}
		}
		\vspace{-1.5\baselineskip}
		\caption{\footnotesize $20\%\times T^{p,\textrm{max}}$}
		\label{fig:cost_allocation_tp_three}
	\end{subfigure}\vspace{-0.2cm}
	\caption{SOs' costs under different cost allocation methods and interface flow limit.}\vspace{-0.6cm}
	\label{fig:cost_allocation}
\end{figure}
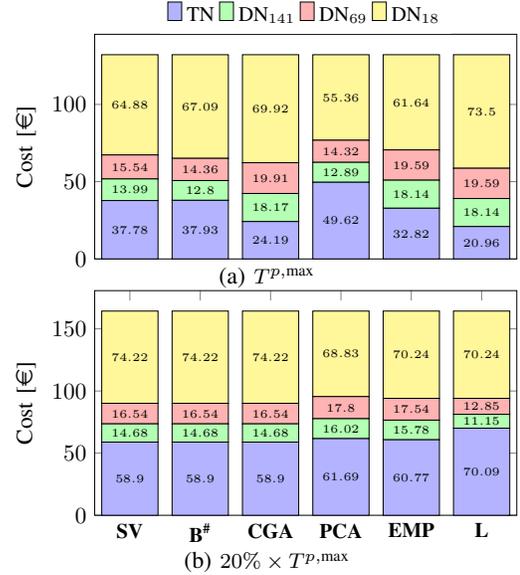

We next investigate the resulting cost for each SO under the different cost allocation methods. 
The stacked bars in Fig.~\ref{fig:cost_allocation} showcase the cost allocated to each system operator via the different mechanisms presented in Section~\ref{sec:allocation_mechanisms} and summarized in Table~\ref{tab:comparison_allocations}.  Fig.~\ref{fig:cost_allocation} highlights the varying proportions of the total cooperation cost to be taken up by each SO under different cost allocation methods. For example, in the results of Fig.~\ref{fig:cost_allocation}, considering the nominal $T^{p,\textrm{max}}$ case, the Shapley value leads all DSOs to bear a higher portion of the total cost as compared to, e.g., the PCA method. The results show that the SV, in this numerical setting, induces a shift in total cost from the TSO to the DSOs as compared to the PCA. 
Under this setting, the CGA is more favorable to the TSO as compared to the SV, while the PCA is less favorable, which is the opposite case to that of the DSOs. 
%
%
%
In addition, the EMP, for example, can help reducing the cost allocated to DN$_{18}$, who, in general, bears the highest cost in all the allocation methods. However, the CGA and EPM do not cover some of the essential properties for adequate cost allocation (as shown in Table~\ref{tab:comparison_allocations}).
%

The interface flow limits have also a direct effect on the disparity of the costs borne by each SO. For example, as shown in Fig.~\ref{fig:cost_allocation}, the 20\%$\times T^{p,\textrm{max}}$ limit increases the cost borne by the TSO for all cost allocation methods, while this limit helps reducing the cost borne by DN$_{18}$ under the Lagragian-based allocation method. In general, Fig.~\ref{fig:cost_allocation} shows that the tighter the interface flow limits the lower the difference in costs under the different cost allocation methods as cooperation becomes more limited and, hence, less consequential. 

The presented numerical results are specific to our numerical case analysis. However, they serve to highlight the fact that different cost allocations and interface flow limits can lead to disparity in the cost borne by each SO. Hence, choosing an adequate allocation scheme (meeting key properties in Table~\ref{tab:comparison_allocations}) while reducing this disparity,
are essential to achieving beneficial, stable, efficient, and fair cooperation among SOs for the joint procurement of flexibility resources.\vspace{-0.2cm}  
\section{Conclusion}\label{sec:Conclusion}
In this paper, we have introduced a common market model for the joint TSO-DSOs procurement of flexibility. We have then developed a cooperative game approach to analyze the cooperation potential of SOs in the common market. In this regard, we have proven the stability of their cooperation, implying that cooperation in this common market can naturally arise without external intervention. In addition, we have analyzed several possible cost allocation mechanisms, to split the cost of the jointly procured flexibility among the participating SOs in a stable and adequate manner, while analyzing the properties of each of those methods. Our numerical results have further highlighted the benefits of this cooperation. In addition, the results have shown the paramount effect of the interface flow limits on the benefits of cooperation, highlighting the need for further investments to improve power exchange capabilities among the different grids. The numerical results also highlighted the disparity that can be introduced by different cost allocation methods, where some methods lead to a shift in the costs borne by different SOs for the same market.
\def\baselinestretch{0.92}
\bibliographystyle{IEEEtran}
\bibliography{references}

\appendix[Compact LP formulation of the common market] \label{appendix}



We aim at expressing (22)-(24) in the compact linear matrix form in (25).
To that purpose, we introduce $\bd{M}_{adj}^{n^T} \in \mc{M}at(\mc{N}^{n^T},\mc{N}^{n^T})$ as the adjacency matrix of graph $\mathcal{G}^{n^T}(\mathcal{N}^{n^T},\mathcal{L}^{n^T})$, where $\bd{M}_{adj}^{n^T}(i,j)=1$ if there exists a link connecting node $i \in \mc{N}^{n^T}$ to node $j \in \mc{N}^{n^T}$; $0$ otherwise. Similarly, we define $\bd{M}_{adj}^{T} \in \mc{M}at(\mc{N}^{T},\mc{N}^{T})$ as the adjacency matrix of $\mathcal{G}^{T}(\mathcal{N}^{T},\mathcal{L}^{T})$, where $\bd{M}_{adj}^T(i,j)=1$ if there exists a link connecting node $i \in \mc{N}^T$ to node $j \in \mc{N}^T$; $0$ otherwise. In addition, we consider the generation shift factor (GSF) matrix $\bd{\Xi} \in \mc{M}at(\mc{L}^{T},\mc{N}^{T})$ such that $\bd{\Xi}\big((i,j),n\big)=X_{(i,j),n}$ is the GSF of line $(i,j) \in \mc{L}^T$ and node $n \in \mc{N}^T$. Moreover, for notation simplicity, we introduce vector $\bd{1}_{n^T}$ of size $1 \times N^T$, with $\bd{1}_{n^T}(n^T)=1$; $0$ otherwise. Similarly, for any $i \in \mc{N}^{n^T}$, we introduce vector $\bd{1}_i^{n^T}$ of size $1 \times N^{n^T}$, with $\bd{1}^{n^T}_i(i)=1; 0$ otherwise. The square matrix with vector $\bd{x}$ on its main diagonal is denoted $\textrm{diag}(\bd{x})$. $\bd{M}(n,:)$ denotes the $n$-th row of matrix $\bd{M}$, while $\bd{M}(:,n)$ denotes its $n$-th column.



We start with the \emph{transmission side}. It is clear that (2) is linear in $\bd{x}_0$ and $\bd{z}$. It can be written under the compact form: $\big( \bd{1}_{n} \; -\bd{1}_{n} \; \bd{1}_{n} \; -\bd{1}_{n}\big) \bd{x}_0 - \big( \bd{0} \; \bd{1}_{n} \; \bd{0} \big) \bd{z} = d^{T,o}_{n}-p_{n}^{T,o}, 
 \forall n \in \mc{N}^T$. Then, (3) is linear in $\bd{z}$ leading to the compact linear formulation: $- \bd{p}^{T^t} \bd{\Xi}(l,:)^t + \bd{P}^{T}(i,j) +\bd{T}^p \bd{\Xi}(l,:)^t=0, \forall l=\{i,j\} \in \mc{L}^T$. Equation (4) applies to nodes on the transmission network, excluding nodes at the interface with the distribution networks. Equation (4) is also linear in $\bd{z}$, taking the form: $p^T_n-\bd{M}_{adj}^{{T}}(n,:) (\bd{P}^T(n,:))^t = 0, \forall n \in \mc{N}^T \setminus \mc{N}^D$. Equation (5) focuses on nodes at the interface between the transmission and the distribution networks. It takes the compact form below, which is again linear in $\bd{z}$: $p^T_n-T^p_n-\bd{M}_{adj}^{{T}}(n,:) (\bd{P}^T(n,:))^t = 0, \forall n \in \mc{N}^D$. Finally, (6)--(8) provides lower and upper-bounds on the variables $\bd{P}^T(i,j)$ in $\bd{z}$ and on $\bd{x}_0$, respectively. 


At the \emph{distribution side}, we consider any DSO--$n^T \in \mc{N}^{n^T}$. 
(10) is linear in $\bd{x}_{n^T}$ and $\bd{z}$: $\Big( -\bd{1}_n^{n^T} \; \bd{1}_n^{n^T} \; -\bd{1}_n^{n^T} \; \bd{1}_n^{n^T} \; \bd{0} \Big) \bd{x}_{n^T} + \Big(\bd{0} \; \bd{1}_n^{n^T} \; \bd{0} \Big) \bd{z} 
= \bd{1}_n^{n^T} \big( \bd{p}^{n^T,o} - \bd{d}^{n^T,o}\big), \forall n \in \mc{N}^{n^T}$. (11) is linear only in $\bd{z}$, and can be written as follows: 
$\bd{p}^{n^T}_n+\bd{1}_n^{n^T}\bd{P}^{n^T}(n,:)-\bd{M}_{adj}^{n^T}(n,:)(\bd{P}^{n^T}(n,:))^t=\bd{0}$. (12) is linear in $\bd{x}_{n^T}$ and $\bd{z}$:
$\bd{q}^{n^T}_n+\bd{1}_n^{n^T}\bd{Q}^{n^T}(n,:)-\bd{M}_{adj}^{n^T}(n,:)(\bd{Q}^{n^T}(n,:))^t=\bd{0}$,
(13)-(14) are linear in $\bd{z}$, giving rise to: $T_{n^T}^p-\bd{M}_{adj}^{n^T}(n_0^{n^T},:)(\bd{P}^{n^T}(n_0^{n^T},:))^t=0$ and $T_{n^T}^q-\bd{M}_{adj}^{n^T}(n_0^{n^T},:)(\bd{Q}^{n^T}(n_0^{n^T},:))^t=0$. For (15), we get: $v_n^{n^T}-v_{A(n)}^{n^T}+2. M_{adj}^{n^T}(n,:)\big(\textrm{diag}(\bd{r}^{n^T})(\bd{P}^{n^T}(n,:))^t-\textrm{diag}(\bd{x}^{n^T})(\bd{Q}^{n^T}(n,:))^t\big)=0$ $\forall n\in\mathcal{N}^{n^T}\setminus n_0^{n^T}$. (16) can be expressed as a function of $\bd{z}$: $\alpha_m {\bd{P}^{n^T}}(i,j)+\beta_m {\bd{Q}^{n^T}}(i,j)+\delta_m {\bd{S}^{n^T\max}}(i,j) \leq 0, \forall m \in \mc{M}\,\, \forall \{i,j\}\in\mathcal{L}^{n^T}$. Finally, (17)-(21) impose lower and upper-bounds on $\bd{x}^{n^T}$ and $\bd{z}$ components.





\ifCLASSOPTIONcaptionsoff
  \newpage
\fi

\end{document}